\pgfplotsset{compat=1.14}
\theoremstyle{plain}
\newtheorem{theorem}{Theorem}
\newtheorem{proposition}[theorem]{Proposition}
\newtheorem{corollary}[theorem]{Corollary}
\Crefname{equation}{Equation}{Equations} 
\Crefname{figure}{Figure}{Figures} 
\theoremstyle{definition}
\newtheorem{definition}{Definition}
\newcommand{\eqdef}{\triangleq}
\DeclareMathOperator{\OPT}{OPT}
\DeclareMathOperator{\Rev}{Rev}
\newcommand{\deltafloor}[1]{\lfloor#1\rfloor_{\delta}}
\title{\texorpdfstring{\vspace{-2em}}{}Bounding the Menu-Size of\texorpdfstring{\\}{ }Approximately Optimal Auctions\texorpdfstring{\\}{ }via Optimal-Transport Duality}
\author{Yannai A. Gonczarowski\thanks{Einstein Institute of Mathematics, Rachel \& Selim Benin School of Computer Science \& Engineering, and Federmann Center for the Study of Rationality, The Hebrew University of Jerusalem, Israel; and Microsoft Research. \emph{E-mail}: \href{mailto:yannai@gonch.name}{yannai@gonch.name}.}}
\date{July 11, 2018}
\begin{document}
\maketitle

\begin{abstract}
\looseness=-1
The question of the minimum menu-size for approximate (i.e., up-to-$\varepsilon$) Bayesian revenue maximization when selling two goods to an additive risk-neutral quasilinear buyer was introduced by \citet{hn13}, who give an upper bound of $O(\nicefrac{1}{\varepsilon^4})$ for this problem. Using the optimal-transport duality framework of \citet{ddt13,ddt15}, we derive the first lower bound for this problem --- of $\Omega(\nicefrac{1}{\sqrt[4]{\varepsilon}})$, even when the values for the two goods are drawn i.i.d.\ from ``nice'' distributions, establishing how to reason about approximately optimal mechanisms via this duality framework. This bound implies, for any fixed number of goods, a tight bound of $\Theta(\log\nicefrac{1}{\varepsilon})$ on the minimum deterministic communication complexity guaranteed to suffice for running some approximately revenue-maximizing mechanism, thereby completely resolving this problem.
As a secondary result, we show that under standard economic assumptions on distributions, the above upper bound of \citet{hn13} can be strengthened to $O(\nicefrac{1}{\varepsilon^2})$.
\end{abstract}

\section{Introduction}

One of the high-level goals of the field of Algorithmic Mechanism Design is to understand the tradeoff between the economic efficiency and the simplicity of mechanisms, with a central example being auction mechanisms. One of the most fundamental scenarios studied in this context is that of revenue-maximization by a single seller who is offering for sale two or more goods to a single buyer. Indeed, while classic economic analysis \citep{m81} shows that for a single good, the revenue-maximizing mechanism is extremely simple to describe, it is known that the optimal auction for even two goods may be surprisingly complex and unintuitive \citep{mm88,t04,mv06,hr15,ddt13,ddt15,gk14,gk15}, eluding a general description to date.

In this paper we study, for a fixed number of goods, the tradeoff between the complexity of an auction and the extent to which it can approximate the optimal revenue.
While one may choose various measures of auction complexity \citep{hn13,dhn14,mr15}, we join several recent papers by focusing on the simplest measure, the \emph{menu-size} suggested by \citet{hn13}. While previous lower bounds on the menu-size as a function of the desired approximation to the revenue all assume a coupling between the number of goods and the desired approximation (so that the former tends to infinity simultaneously with the latter tending to optimal; e.g., setting~${\varepsilon\eqdef\nicefrac{1}{n}}$), in this paper we focus on the behavior of the menu-size as a function only of the desired approximation to the revenue, keeping the number of goods fixed and uncoupled from it. In particular, we obtain the first lower-bound on the menu-size that is not asymptotic in the number of goods, thereby quantifying the degree to which the menu-size of an auction really is a bottleneck to extracting high revenue even for a fixed number of goods.

\vspace{-.5em}\paragraph{Revenue Maximization}\looseness=-1 We consider the following classic setting. A risk-neutral seller has two goods for sale. A risk-neutral quasilinear buyer has a valuation (maximum willingness to pay) $v_i\in[0,1]$ for each of these goods~$i$,\footnote{Having $1$ rather than any other upper bound is without loss of generality, as the units are arbitrary.} and has an additive valuation (i.e., values the bundle of both goods by the sum of the valuations for each good). The seller has no access to the valuations~$v_i$, but only to a joint distribution $F$ from which they are drawn.
The seller wishes to devise a (truthful) auction mechanism for selling these goods, which will maximize her revenue among all such mechanisms, in expectation over the distribution~$F$. (The seller has no use for any unsold good.) We denote the maximum obtainable expected revenue by~$\OPT(F)$.

\vspace{-.5em}\paragraph{Menu-Size}
\citet{hn13} have introduced the \emph{menu-size} of a mechanism as a measure of its complexity:
this measure counts the number of possible outcomes of the mechanism (where an outcome is a specification of an allocation probability for each good, coupled with a price).\footnote{By the Taxation Principle, any mechanism is essentially described by the menu of its possible outcomes, as the mechanism amounts to the buyer choosing from this menu an outcome that maximizes her utility.} \citet{ddt13} have shown that even in the case of independently distributed valuations for the two goods, precise revenue maximization may require an infinite menu-size; \citet{ddt15} have shown this even when the valuations for the two goods are drawn i.i.d.\ from ``nice'' distributions. In light of these results, relaxations of this problem, allowing for mechanisms that maximize revenue up-to-$\varepsilon$, were considered.

\vspace{-.5em}\paragraph{Approximate Revenue Maximization}
\citet{hn13} have shown that a menu-size of~$O(\nicefrac{1}{\varepsilon^4})$ suffices for maximizing revenue up to an additive~$\varepsilon$:

\begin{theorem}[\citealp{hn13}]\label{hn}
There exists $C(\varepsilon)=O(\nicefrac{1}{\varepsilon^4})$ such that for every $\varepsilon>0$ and for every distribution $F\in\Delta\bigl([0,1]^2\bigr)$, there exists a mechanism $M$ with menu-size at most $C(\varepsilon)$ such that $\Rev_F(M)>\OPT(F)-\varepsilon$. ($\Rev_F(M)$ is the expected revenue of $M$ from~$F$.)
\end{theorem}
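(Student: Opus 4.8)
The plan is to construct, starting from an arbitrary (possibly infinite-menu, possibly randomized) optimal or near-optimal mechanism $M^*$ for $F$, a finite-menu mechanism $M$ that loses at most $\varepsilon$ in expected revenue, with the number of menu entries bounded by $O(1/\varepsilon^4)$. The construction proceeds by \emph{discretizing the menu in outcome space}: each outcome of $M^*$ is a triple $(q_1,q_2,p)\in[0,1]^2\times[0,2]$ (allocation probabilities and a price). I would first argue that without loss of generality we may assume prices lie in $[0,2]$ (the buyer never pays more than her maximum total value $v_1+v_2\le 2$), so the outcome space is a bounded box. The idea is to round every coordinate down to the nearest multiple of some grid parameter $\delta$ (using the \deltafloor{\cdot} notation the paper has set up), obtaining a mechanism $M'$ whose menu lives in a grid of size $O(1/\delta^3)$.

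The key steps, in order, are: \textbf{(1)} Reduce to mechanisms with bounded prices and argue the relevant outcome box is $[0,1]^2\times[0,2]$. \textbf{(2)} Show that rounding all allocation probabilities \emph{down} by at most $\delta$ changes the buyer's utility from any fixed menu entry by at most $2\delta$ (since values are in $[0,1]$), and rounding prices down by at most $\delta$ helps the buyer by at most $\delta$; conclude that the buyer's chosen entry in the rounded menu gives her utility within $O(\delta)$ of her original utility, hence the price she pays drops by at most $O(\delta)$ relative to what she paid in $M^*$ — but one must be careful, because the buyer may \emph{switch} to a different menu entry. The standard trick here is incentive-compatibility / a revealed-preference argument: whatever entry she switches to in $M'$, she preferred it (in $M'$) to the $\delta$-rounded version of her original $M^*$-entry, which gives a lower bound on the $M'$-revenue she generates. \textbf{(3)} This yields $\Rev_F(M') \ge \OPT(F) - O(\delta)$ with $|M'|$-menu-size $O(1/\delta^3)$. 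To get the claimed $1/\varepsilon^4$ one sharpens the accounting: the revenue loss is $O(\delta)$ only per-buyer-type but Hart–Nisan's actual bound comes from choosing the grid only in the two allocation coordinates (giving $O(1/\delta^2)$ entries, each of which can carry an \emph{optimal} price for that allocation pair, avoiding price-discretization loss), and then bounding the loss from allocation-rounding. Revisiting: round only $(q_1,q_2)$ to a $\delta$-grid; for each of the $O(1/\delta^2)$ resulting allocation pairs keep the supremum over original prices of entries rounding to it. Show the loss is $O(\delta)$ — no wait, this needs the loss to be $O(\delta^{1/2})$ to land at $1/\varepsilon^4$, so the bound $|M| = O(1/\delta^2) = O(1/\varepsilon^4)$ forces $\delta = \varepsilon^2$, i.e. the revenue loss from $\delta$-rounding allocations must be shown to be $O(\sqrt{\delta})$. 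That weaker $\sqrt{\delta}$ bound on the loss is exactly what makes the argument work and is where the real content sits.

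The main obstacle, therefore, is \textbf{step (2)–(3): bounding the revenue loss from allocation-discretization by $O(\sqrt\delta)$ rather than the naive $O(\delta)$-per-type-giving-$O(\delta)$-overall, while keeping only $O(1/\delta^2)$ menu entries.} The naive bound would give menu-size $O(1/\varepsilon^2)$, which is too good to be the stated theorem — so the subtlety is that one \emph{cannot} both keep few menu entries \emph{and} get $O(\delta)$ loss simultaneously, and the honest argument trades these off. Concretely, I expect the proof to split buyer types into a ``low-value'' region (total value $\lesssim \sqrt\delta$, contributing at most $O(\sqrt\delta)$ revenue in total and hence safely ignorable) and a ``high-value'' region where one shows the rounded menu retains enough of the revenue; alternatively, one bounds the probability mass of types for whom the rounding causes a large utility swing using a pigeonhole/averaging argument over the grid. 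Handling the buyer's strategic switching between menu entries cleanly — ensuring the rounded mechanism is still IC and that switching never costs more revenue than the bound — is the part that requires the most care, and I would lean on the Taxation-Principle formulation (the mechanism \emph{is} its menu, buyer picks utility-maximizing entry) to make the revealed-preference inequalities go through.
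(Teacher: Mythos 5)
The paper does not prove \cref{hn}; it is cited as a result of \citet{hn13}. The only material the paper itself offers toward it is a high-level account of the ``nudge and round'' technique, given in \cref{sec:upper} en route to the $O(\nicefrac{1}{\varepsilon^2})$ conditional bound of \cref{upper}, which is described as following the Hart--Nisan strategy but rounding one fewer of the three outcome coordinates.

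Measured against that account, your proposal is missing the one ingredient that makes the argument actually close: the ``nudge'', i.e.\ a multiplicative $(1-\varepsilon)$ discount applied to the (rounded) prices. Your step (2) runs into exactly the difficulty you flag --- the buyer switching entries --- and the revealed-preference inequality you propose does not resolve it. Revealed preference together with ``each entry's utility moves by $O(\delta)$'' pins the buyer's \emph{utility} in the rounded menu to within $O(\delta)$ of her original utility, but it says nothing about the \emph{price} of the entry she switches to: a buyer who was nearly indifferent between a very cheap and a very expensive entry can be tipped to the cheap one by a tiny perturbation, causing an $\Omega(1)$ per-type revenue loss that no amount of ``the utility didn't move much'' can control. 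This is the ``last straw'' phenomenon the paper describes explicitly. The nudge kills it: after a $(1-\varepsilon)$ discount, an entry whose price is cheaper by more than $\varepsilon$ receives a discount smaller by more than $\varepsilon\cdot\varepsilon=\delta$, which swallows any utility gain from rounding, so a buyer never jumps to an entry more than $O(\varepsilon)$ cheaper. With the nudge in place, rounding two of the three coordinates to a $\delta=\varepsilon^2$ grid and keeping one extremal entry per grid cell gives menu-size $O(\nicefrac{1}{\delta^2})=O(\nicefrac{1}{\varepsilon^4})$ with $O(\varepsilon)$ total loss.

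Your final paragraph goes astray as a consequence of this omission. The ``naive'' bounds you derive ($O(\nicefrac{1}{\delta^3})$ entries with $O(\delta)$ loss, and then $O(\nicefrac{1}{\delta^2})$) would both \emph{beat} the stated $O(\nicefrac{1}{\varepsilon^4})$; you correctly sense this is ``too good'' but misdiagnose why. The $O(\delta)$-loss claim is exactly what the last-straw problem invalidates; no $O(\sqrt\delta)$-loss refinement, and no split into low-value/high-value regions, is needed or used --- those are artifacts of trying to reconcile an incorrect naive loss bound with the stated theorem rather than repairing the loss bound itself via the discount.
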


While the above-described results of \citeauthor{ddt15}\ imply that the menu-size required for up-to-$\varepsilon$ revenue maximization tends to infinity as $\varepsilon$ tends to $0$, no lower bound whatsoever was known on the speed at which it tends to infinity. (I.e., all that was known was that the menu-size is~$\omega(1)$ as a function of $\varepsilon$.) Our main result, which we prove in \cref{sec:lower}, is the first lower bound on the required menu-size for this problem,\footnote{In fact, we do not know of any previous menu-size bound, for any problem, that lower-bounds the menu-size as a function of $\varepsilon$ without having the number of goods $n$ also tend to infinity (e.g., by setting~$\varepsilon\eqdef\nicefrac{1}{n}$).} showing that a polynomial dependence on~$\varepsilon$ is not only sufficient, but also required, hence establishing that the menu-size really is a nontrivial bottleneck to extracting high revenue, even for two goods and even when the valuations for the goods are drawn i.i.d.\ from ``nice'' distributions:

\begin{theorem}[Menu-Size: Lower Bound]\label{lower}
There exist $C(\varepsilon)=\Omega(\nicefrac{1}{\sqrt[4]{\varepsilon}})$ and a distribution $F\in\Delta\bigl([0,1]\bigr)$, such that for every $\varepsilon>0$ it is the case that $\Rev_{F^2}(M)<\OPT(F^2)-\varepsilon$ for every mechanism~$M$ with menu-size at most $C(\varepsilon)$.
\end{theorem}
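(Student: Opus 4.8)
The plan is to exhibit a single distribution $F$ on $[0,1]$ such that, with $F^2$ the product distribution, any mechanism with small menu-size is bounded away from $\OPT(F^2)$ by an amount that shrinks only polynomially in the menu-size. The natural candidate, following \citet{ddt15}, is a distribution for which the optimal mechanism is known to have infinite menu-size in a controlled way — e.g., a power-law / Beta-type density on $[0,1]$ whose optimal two-good mechanism has a menu consisting of a continuum of ``tangent'' options along a curve. The first step is to recall or re-derive, via the optimal-transport duality of \citet{ddt13,ddt15}, the exact optimal mechanism for $F^2$: this duality characterizes $\OPT(F^2)$ through a measure (the ``optimal transport'' / complementary-slackness witness) supported on the region where the buyer's utility function is nonlinear, and the optimal allocation is a convex, $1$-Lipschitz function $u$ on $[0,1]^2$ whose gradient traces out the menu.

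Next I would set up the quantitative loss estimate: given any mechanism $M$ with menu-size $k$, its utility function $u_M$ is a maximum of $k$ affine functions, hence its graph is a polyhedral surface with at most $k$ facets. The revenue gap $\OPT(F^2) - \Rev_{F^2}(M)$ can be written, using the duality/integration-by-parts identity, as an integral of $(u_M - u^\ast)$ against the optimal dual measure plus boundary terms, or more robustly, as an integral measuring how poorly the polyhedral $u_M$ approximates the smooth optimal $u^\ast$ in the relevant norm. The key geometric fact to establish is that approximating a strictly convex smooth surface over a $2$-dimensional domain to within sup-distance $\eta$ by a piecewise-linear function requires $\Omega(1/\eta)$ facets — so $k$ facets force an $L^\infty$ error of $\Omega(1/k)$ somewhere along the critical curve, and one must then show this pointwise error integrates (against the dual measure, which has positive density near that curve) to a revenue loss of $\Omega(1/k^?)$. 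Matching the claimed exponent $C(\varepsilon) = \Omega(1/\sqrt[4]{\varepsilon})$ means the revenue loss should scale like $\Omega(1/k^4)$, so the loss contributed per ``badly approximated'' region is roughly (local error)$^{?}$ times (measure of region), and I would reverse-engineer the distribution $F$ so these powers multiply out to the fourth power — plausibly $(1/k)$ in error times $(1/k)$ in region-width times an extra $(1/k^2)$ from the curvature/density vanishing, or by a direct variational lower bound on $\int (\text{gap})\,d\mu$ over all $k$-facet surfaces.

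Concretely, I would (i) choose $F$ explicitly (a specific density on $[0,1]$, presumably with a carefully tuned tail/boundary behavior so that the constants work and the distribution is ``nice'' — continuous, bounded density, etc.); (ii) invoke the \citet{ddt15} machinery to pin down $u^\ast$ and the dual certificate $\mu^\ast$ for $F^2$; (iii) prove the facet-counting lower bound for piecewise-linear approximation of $u^\ast$ on the support of $\mu^\ast$; (iv) translate an $L^\infty$ (or $L^1(\mu^\ast)$) approximation error of $\delta$ into a revenue loss of order $\delta$ via weak duality (any feasible $u_M$ gives $\Rev_{F^2}(M) \le \OPT(F^2)$ automatically, and the slackness in the dual inequalities is exactly the per-point loss); (v) optimize the trade-off to get loss $= \Omega(1/k^4)$, equivalently a menu-size lower bound of $\Omega(1/\sqrt[4]{\varepsilon})$; and finally address uniformity in $\varepsilon$ (the statement quantifies over all $\varepsilon>0$ with a single $F$, so the estimate in (iv)–(v) must hold for every scale, which is why $F$ must be a genuine fixed distribution with the optimal mechanism of infinite menu-size, not a family).

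The main obstacle I anticipate is step (iv): turning a purely geometric ``$k$ facets cannot track this curve'' statement into a genuine \emph{revenue} lower bound requires the duality framework to produce a clean, pointwise, non-negative expression for the revenue gap — i.e., one needs the complementary-slackness measure $\mu^\ast$ to have full support (or at least positive density) precisely on the region where the polyhedral approximation must be bad, and one needs to control the boundary/transport terms so they don't cancel the interior loss. Getting the exponents to come out to exactly $4$ (rather than $2$ or $3$) is the delicate quantitative heart of the argument and will likely dictate the precise choice of $F$; I would expect the ``extra'' factors to come from the optimal utility surface being only mildly curved near the critical curve (curvature vanishing at a controlled rate), so that local affine approximation is unusually good and one needs a wider region — hence more facets — to accumulate the same loss.
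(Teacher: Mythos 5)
Your high-level plan is aligned with the paper: same $F$ family (the Beta distribution $B(1,2)$ from DDT'15), same use of the optimal-transport dual certificate, same observation that $u_M$ is a max of $k$ affine functions, same facet-counting strategy. But there are two concrete gaps where your proposal diverges from a working argument.

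First, the dimension of the approximation problem. You propose to lower-bound the piecewise-linear approximation error of the \emph{two-dimensional surface} $u^\ast$ over $[0,1]^2$ (``$k$ facets force an $L^\infty$ error of $\Omega(1/k)$''). The paper instead reduces to a \emph{one-dimensional} approximation problem: it fixes the allocation-threshold level set $T(x_1)\eqdef\sup\{x_2 : u'_2(x_1,x_2)\le \nicefrac{1}{2}\}$ and compares $T$ to the strictly-concave curve $S$ that separates the ``allocate item 2 w.p.\ 1'' region $\mathcal{A}$ from the ``allocate nothing'' region $\mathcal{Z}$. Because $u$ is a max of $k$ planes, $T$ is piecewise-linear with at most $k$ segments. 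The $1$-D approximation rate is then $\delta \sim 1/k^2$ (chord--sagitta in a curve of bounded radius of curvature), not $1/k$. Your $2$-D surface argument is not obviously wrong, but it does not cleanly interact with the DDT dual, whose complementary-slackness witness $\hat\gamma$ transports mass \emph{vertically} in $\mathcal{A}$ and \emph{radially from the origin} in $\mathcal{Z}$; the $1$-D curve $T$ is exactly the object along which these transports cross the boundary and incur slackness. Also, the choice of $\nicefrac{1}{2}$ (rather than $0$ or $1$) for the threshold in defining $T$ is a small but essential trick that you have not anticipated, and the $L^\infty$ surface comparison doesn't furnish it.

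Second, the exponent accounting. You reverse-engineer $1/k^4$ but guess the wrong sources: you posit ``$(1/k)$ error $\times$ $(1/k)$ width $\times$ $(1/k^2)$ from curvature/density vanishing,'' and you suggest $F$ must be tuned so that curvature or density vanishes at a controlled rate near the critical curve. Neither is the case: the curvature of $S$ and the density of the dual measure $\hat\mu'$ near $S$ are both bounded away from zero, so there is nothing to tune — the paper uses the existing $B(1,2)$ example from DDT'15 as-is. The $1/k^4$ comes from: $\delta\sim 1/k^2$ (1-D chord--sagitta rate), then revenue loss $\sim \delta^2$ because the violated set has vertical width $\sim\delta$ and each violating $(x,y)$ pair contributes a slackness of $\sim\delta$, integrated against a density bounded below. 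Your worry about boundary/transport terms ``canceling the interior loss'' also does not arise: the chain of inequalities proving weak duality has every slackness term of the same sign, so any pointwise violation of complementary slackness accumulates monotonically into a revenue gap. In short, your proposal captures the intent but is missing the key technical move (the $1$-D $\nicefrac{1}{2}$-threshold level curve) and misattributes the exponents.
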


Our proof of \cref{lower} uses the optimal-transport duality framework of \citet{ddt13,ddt15}. To the best of our knowledge, this is the first use of this framework to reason about approximately optimal mechanisms, thereby establishing how to leverage this framework to do so.
 
\vspace{-.5em}\paragraph{Communication Complexity} As \citet{bgn17} show, the logarithm (base~$2$, rounded up) of the menu-size of a mechanism is precisely the deterministic communication complexity (between the seller and the buyer) of running this mechanism, when the description of the mechanism itself is common knowledge. Therefore, by \cref{hn,lower}, we obtain a tight bound on the minimum deterministic communication complexity guaranteed to suffice for running some up-to-$\varepsilon$ revenue-maximizing mechanism, thereby completely resolving this problem:

\begin{corollary}[Communuication Complexity: Tight Bound]\label{cc}
There exists $D(\varepsilon)=\Theta(\log\nicefrac{1}{\varepsilon})$ such that for every $\varepsilon>0$ it is the case that $D(\varepsilon$) is the minimum communication complexity that satisfies the following: for every distribution $F\in\Delta\bigl([0,1]^2\bigr)$ there exists a mechanism~$M$ such that the deterministic communication complexity of running~$M$ is $D(\varepsilon$) and such that $\Rev_F(M)>\OPT(F)-\varepsilon$. This continues to hold even if $F$ is guaranteed to be a product of two independent identical distributions.
\end{corollary}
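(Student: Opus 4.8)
The plan is to obtain \cref{cc} by combining the menu-size upper bound of \cref{hn}, the menu-size lower bound of \cref{lower}, and the menu-size/communication equivalence of \citet{bgn17}, according to which a mechanism of menu-size $m$ can be run with deterministic communication complexity exactly $\lceil\log_2 m\rceil$ bits (and with no fewer). The key reduction is that this equivalence lets one translate the question ``what is the smallest menu-size that suffices for up-to-$\varepsilon$ revenue maximization (simultaneously for all $F$)?'' into the question asked by the corollary. Concretely, let $m^*(\varepsilon)$ denote the least $m$ such that for every $F\in\Delta\bigl([0,1]^2\bigr)$ there is a mechanism of menu-size $m$ that is up-to-$\varepsilon$ revenue maximizing for $F$; then the quantity $D(\varepsilon)$ sought in the corollary equals $\lceil\log_2 m^*(\varepsilon)\rceil$. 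I would remark in passing on the only definitional subtlety: a mechanism runnable with $c$ bits is runnable with any $c'\ge c$ bits by padding the transcript, so reading ``communication complexity exactly $D(\varepsilon)$'' versus ``at most $D(\varepsilon)$'' yields the same minimum.

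First I would establish the upper bound $D(\varepsilon)=O(\log\nicefrac{1}{\varepsilon})$. By \cref{hn} there is $C_{\mathrm{up}}(\varepsilon)=O(\nicefrac{1}{\varepsilon^4})$ such that every $F\in\Delta\bigl([0,1]^2\bigr)$ admits a menu-size-$C_{\mathrm{up}}(\varepsilon)$ mechanism $M$ with $\Rev_F(M)>\OPT(F)-\varepsilon$; by \citet{bgn17}, running such an $M$ costs at most $\lceil\log_2 C_{\mathrm{up}}(\varepsilon)\rceil=O(\log\nicefrac{1}{\varepsilon})$ bits. Hence $m^*(\varepsilon)\le C_{\mathrm{up}}(\varepsilon)$ and $D(\varepsilon)=\lceil\log_2 m^*(\varepsilon)\rceil\le\lceil\log_2 C_{\mathrm{up}}(\varepsilon)\rceil=O(\log\nicefrac{1}{\varepsilon})$.

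Next I would establish the matching lower bound $D(\varepsilon)=\Omega(\log\nicefrac{1}{\varepsilon})$. By \cref{lower} there is a single-good distribution $F\in\Delta\bigl([0,1]\bigr)$ and a function $C_{\mathrm{lo}}(\varepsilon)=\Omega(\nicefrac{1}{\sqrt[4]{\varepsilon}})$ such that no mechanism of menu-size at most $C_{\mathrm{lo}}(\varepsilon)$ satisfies $\Rev_{F^2}(M)\ge\OPT(F^2)-\varepsilon$; equivalently, any up-to-$\varepsilon$ revenue-maximizing mechanism for the product distribution $F^2$ has menu-size strictly greater than $C_{\mathrm{lo}}(\varepsilon)$, so $m^*(\varepsilon)>C_{\mathrm{lo}}(\varepsilon)$ and, by \citet{bgn17}, any such mechanism requires more than $\lceil\log_2 C_{\mathrm{lo}}(\varepsilon)\rceil=\Omega(\log\nicefrac{1}{\varepsilon})$ bits. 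Combining the two directions yields $\nicefrac{1}{4}\log_2\nicefrac{1}{\varepsilon}-O(1)<\log_2 m^*(\varepsilon)\le 4\log_2\nicefrac{1}{\varepsilon}+O(1)$, hence $D(\varepsilon)=\lceil\log_2 m^*(\varepsilon)\rceil=\Theta(\log\nicefrac{1}{\varepsilon})$. The final ``even if $F$ is a product of two independent identical distributions'' clause comes for free: the upper bound above already holds for every $F$ and in particular for products, while the lower bound is witnessed by $F^2$, which is itself a product of i.i.d.\ distributions, so restricting attention to such $F$ leaves both bounds intact.

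I do not expect a genuine obstacle here — the corollary is essentially bookkeeping layered on top of \cref{hn,lower} and the \citet{bgn17} equivalence. The only points warranting a little care are (i) the precise reading of ``the minimum communication complexity that satisfies the following,'' where one must verify that $D(\varepsilon)$ is simultaneously achievable for every $F$ (from the upper bound) and not improved upon by any smaller value for at least one $F$ (from the lower bound), and (ii) that the integer rounding in $\lceil\log_2(\cdot)\rceil$ does not disturb the $\Theta(\log\nicefrac{1}{\varepsilon})$ asymptotics, which it plainly does not since $\log\nicefrac{1}{\varepsilon}\to\infty$ as $\varepsilon\to 0$.
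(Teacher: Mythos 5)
Your proposal is correct and follows essentially the same route as the paper: combine the menu-size upper bound of \cref{hn}, the menu-size lower bound of \cref{lower}, and the equivalence of \citet{bgn17} between menu-size $m$ and deterministic communication complexity $\lceil\log_2 m\rceil$, noting that the lower bound's witness $F^2$ is itself an i.i.d.\ product. The paper treats this as immediate bookkeeping and states it without a separate proof; you simply spell out the same argument in more detail.
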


In \cref{extensions}, we extend this tight communication-complexity bound to any fixed number of goods, as well as derive analogues of our results for multiplicative up-to-$\varepsilon$ approximation.

\vspace{1em}
\looseness=-1
While our lower bound completely resolves the open question of whether a polynomial menu-size is necessary (and not merely sufficient), and while it tightly characterizes the related communication complexity, it does not yet fully characterize the precise polynomial dependence of the menu-size on $\varepsilon$. While the proof of our lower bound (\cref{lower}) makes delicate use of a considerable amount of information regarding the optimal mechanism via the optimal-transport duality framework of \citet{ddt13,ddt15}, the proof of the upper bound of \citet{hn13} (\cref{hn}) makes use of very little information regarding the optimal mechanism. (As noted above, indeed very little is known regarding the structure of general optimal mechanisms.) Our secondary result, which we prove in \cref{sec:upper}, shows that under standard economic assumptions on valuation distributions, the upper bound of \citet{hn13} can be tightened by two orders of magnitude. This suggests that it may well be possible to use more information regarding the structure of optimal mechanisms, as such will be discovered, to unconditionally improve the upper bound of \citet{hn13}.

\begin{theorem}[Menu-Size: Conditional Upper Bound]\label{upper}
There exists $C(\varepsilon)=O(\nicefrac{1}{\varepsilon^2})$ such that for every $\varepsilon>0$ and for every distribution $F\in\Delta\bigl([0,1]^2\bigr)$ satisfying the McAfee-McMillan hazard condition (see \cref{hazard} in \cref{sec:upper}), there exists a mechanism $M$ with menu-size at most $C(\varepsilon)$ such that $\Rev_F(M)>\OPT(F)-\varepsilon$.
\end{theorem}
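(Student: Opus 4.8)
The plan is to refine the argument behind \cref{hn} of \citet{hn13} by exploiting that the McAfee–McMillan hazard condition gives quantitative control over how much revenue can be lost "high up" in the value space. Recall that the $O(\nicefrac{1}{\varepsilon^4})$ bound comes from (i) truncating the type space to a subcube $[0,1-\delta]^2$ losing $O(\delta)$ revenue, hence $\delta = \Theta(\varepsilon)$, and then (ii) discretizing both the allocation probabilities and the prices on a grid of spacing $\Theta(\varepsilon)$ over a region whose "price span'' is $\Theta(1)$ and whose allocation span is $\Theta(1)$, giving $\Theta(\nicefrac{1}{\varepsilon}) \times \Theta(\nicefrac{1}{\varepsilon}) \times \Theta(\nicefrac{1}{\varepsilon}) \times \Theta(\nicefrac{1}{\varepsilon})$ menu entries. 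The two powers of $\varepsilon$ we want to shave correspond to the two "span'' factors: under the hazard condition we will be able to take the truncation parameter $\delta$ to be only $\Theta(\sqrt{\varepsilon})$ rather than $\Theta(\varepsilon)$ (since the hazard condition forces the contribution of the tail of $F$ to $\OPT(F)$ to decay, so the revenue lost by ignoring types with $v_1+v_2 > 2-\delta$ is $o(\delta)$ — plausibly $O(\delta^2)$), and symmetrically the relevant \emph{price} span over which a menu must be resolved to within $\varepsilon$ is only $\Theta(\sqrt\varepsilon)$-deep rather than $\Theta(1)$-deep. Multiplying the two savings gives the claimed factor-$\varepsilon^{-2}$ improvement.

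Concretely I would proceed as follows. First I would state the McAfee–McMillan hazard condition formally (this is \cref{hazard}, to be placed in \cref{sec:upper}) and record its standard consequence: bounding the marginal revenue contribution of types in an upper corner of $[0,1]^2$, say $\Rev$ extracted from types with either coordinate exceeding $1-\delta$ is $O(\delta^2)$ (or whatever the condition yields; this is where the hazard rate being bounded away from the degenerate case buys a quadratic rather than linear estimate). Second, using this, I would show that there is a near-optimal mechanism that only ever "charges attention'' to the region where $v_1,v_2 \le 1-\delta$ with $\delta=\Theta(\sqrt\varepsilon)$, at a revenue cost of $O(\delta^2)=O(\varepsilon)$. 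Third — the genuinely new quantitative step — I would argue that on this truncated region, the optimal allocation-probability function and payment function are Lipschitz / monotone in a way that lets one show that \emph{rounding prices to a grid of spacing $\varepsilon$ only matters within a band of depth $\Theta(\sqrt\varepsilon)$ near the "frontier''}, so that the number of distinct menu entries actually needed is $O(\sqrt\varepsilon/\varepsilon) \times O(\sqrt\varepsilon/\varepsilon) = O(1/\varepsilon)$ rather than $O(1/\varepsilon^2)$ — wait, I need the final count to be $O(1/\varepsilon^2)$, so more carefully: the allocation grid contributes $O(1/\varepsilon)$ per good as in \citet{hn13}, i.e. $O(1/\varepsilon^2)$ from the two allocation coordinates, while the price coordinate, thanks to the hazard condition localizing where price-rounding costs revenue, contributes only $O(1)$ distinct relevant price levels per allocation cell (or, dually, the allocation span shrinks to $O(\sqrt\varepsilon)$ and the price span stays $O(\sqrt\varepsilon)$). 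Either bookkeeping route yields the product $O(1/\varepsilon^2)$; I would pick whichever makes the Lipschitz estimates cleanest.

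Finally I would assemble: take the $\varepsilon$-optimal mechanism guaranteed by the construction, show its menu-size is $O(1/\varepsilon^2)$ by the counting above, and verify $\Rev_F(M) > \OPT(F) - \varepsilon$ by adding up the truncation loss $O(\varepsilon)$ and the discretization loss $O(\varepsilon)$ (absorbing constants). The main obstacle I anticipate is the third step: extracting, purely from the hazard condition, a \emph{quantitative} statement that the only place where coarse price/allocation rounding costs more than $O(\varepsilon^2)$ revenue is a $\Theta(\sqrt\varepsilon)$-thin sliver of type space — this requires knowing enough about the structure of the optimal mechanism (monotonicity of the induced payment, convexity of the buyer's indirect utility, and how the hazard condition constrains the "ironing''-type behavior near the top of the support) to turn a soft $o(1)$ tail bound into a sharp $O(\sqrt\varepsilon)$ localization. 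The truncation step and the grid-counting step are essentially the \citet{hn13} argument re-run with the improved $\delta$, so I expect those to be routine; the hazard-condition-to-localization implication is where the real work lies.
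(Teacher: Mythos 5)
Your proposal takes a genuinely different route from the paper, but it has a gap you yourself flag and cannot close, so it does not constitute a proof. You try to extract from the McAfee--McMillan hazard condition a quantitative tail/localization statement (revenue from a $\delta$-corner is $O(\delta^2)$, price-rounding only matters in a $\Theta(\sqrt\varepsilon)$-thin sliver). Nothing in the hazard condition directly gives such a bound, and you explicitly concede that turning the ``soft $o(1)$ tail bound into a sharp $O(\sqrt\varepsilon)$ localization'' is ``where the real work lies'' -- that work is absent. Your reconstruction of the \citet{hn13} argument is also off: their $O(\nicefrac{1}{\varepsilon^4})$ bound comes from a ``nudge and round'' discretization of two of the three outcome coordinates (both allocation probabilities), not from a truncation-plus-grid count as you describe, so the bookkeeping you set up to ``shave two powers of $\varepsilon$'' is not matched to what actually produces the four powers.

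The paper's actual proof uses a structural result you do not mention and which is the entire point of assuming the hazard condition: a theorem of \citet{p11} that, for $n=2$ under the McAfee--McMillan condition, there is a revenue-maximizing mechanism in which no outcome has both allocation probabilities strictly inside $(0,1)$ -- at least one of $p_1,p_2$ is already $0$ or $1$. This means one of the three outcome coordinates is already discrete, so it suffices to ``nudge and round'' only the price: set the price grid to spacing $\delta=\varepsilon^2$ and apply a multiplicative discount of $1-\varepsilon$, exactly as in \citet{hn13} but in one coordinate rather than two. This loses at most $2\varepsilon$ in revenue; and since for each of the $O(\nicefrac{1}{\varepsilon^2})$ rounded prices, and for each of the four cases $p_i\in\{0,1\}$, only the menu entry with the extremal remaining allocation probability can ever be chosen, the effective menu-size is $O(\nicefrac{1}{\varepsilon^2})$. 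So the improvement comes not from a sharper tail estimate but from a dimensionality reduction in the rounding, enabled by Pavlov's structure theorem. If you want to pursue your route, you would first need to establish an analogue of that localization lemma from the hazard condition, which is not known and may well be false; the Pavlov-based route sidesteps this entirely.
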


The question of a precise tight bound on the menu-size required for up-to-$\varepsilon$ revenue maximization remains open, for correlated as well as product (even i.i.d.)\ distributions.
We further discuss our results and their connection to other results and open problems in the literature in \cref{discussion}.

\section{Lower Bound}\label{sec:lower}

We prove our lower bound (i.e., \cref{lower}) by considering the setting in which \citet{ddt15} show that precise revenue maximization requires infinite menu-size, that is, the case of two items with i.i.d.\ valuations each drawn from the Beta distribution $F\eqdef B(1,2)$, i.e., the distribution over $[0,1]$ with density ${f(x)\eqdef2(1-x)}$. We first present, in \cref{sec:lower-overview}, a very high-level overview of the main proof idea in a way that does not go into any technical details regarding the duality framework of \citet{ddt13,ddt15}. We then present, in \cref{sec:lower-ddt}, only the minimal amount of detail from the extensive analysis of \citet{ddt15} that is required to follow our proof. Finally, in \cref{sec:lower-proof} we prove \cref{lower}.

\subsection{Proof Idea Overview}\label{sec:lower-overview}

We start by presenting a very high-level overview of the main idea of the proof of \cref{lower} in a way that does not go into any technical details regarding the duality framework of \citet{ddt13,ddt15}.

Fix a concrete distribution from which the values of the goods are drawn.
Let us denote the set of all truthful mechanisms by $S_p$ and for each $s_p\in S_p$, let us denote its expected revenue by $o_p(s_p)$.\footnote{As will become clear momentarily, $S$ and $s$ stand for ``solution,'' $p$ stands for ``primal,'' and $o$ stands for ``objective.''} The revenue maximization problem is to find a \emph{solution} $s_p\in S_p$ for which the value of the \emph{objective function} $o_p$ is maximal. \citet{ddt13,ddt15} identify a \emph{dual problem} to the revenue maximization problem: this is a minimization problem, i.e., a problem where the goal is to find a solution $s_d$ from a specific set of feasible solutions $S_d$ that they identify, that minimizes the value $o_d(s_d)$ of a specific objective function $o_d$ that they identify. This problem is an instance of a class of problems called \emph{optimal-transport problems}, and it is a dual problem to the revenue maximization problem in the sense that for every pair of solutions $(s_p,s_d)$ for the primal (revenue maximization) problem and dual (optimal-transport) problem respectively, it holds that the value of the primal objective function for the primal solution is upper-bounded by the value of the dual objective function for the dual solution, that is,
\begin{equation}\label{overview-weak-dual}
\forall s_p\in S_p,s_d\in S_d:\quad o_p(s_p)\le o_d(s_d).
\end{equation}
(See \cref{weak-dual} below for the full details.)
This property is called \emph{weak duality}. \citet{ddt15} also show that this duality is \emph{strong} in the sense that there always exists a pair of solutions $(\hat{s}_p,\hat{s}_d)$ for the primal and dual problems respectively such that 
\begin{equation}\label{overview-tight-dual}
o_p(\hat{s}_p)=o_d(\hat{s}_d).
\end{equation}
A standard observation in duality frameworks is that such $\hat{s}_d$ \emph{certifies} that $\hat{s}_p$ is an optimal solution for the primal problem, since by \cref{overview-weak-dual} the value of \emph{any} primal solution is bounded by $o_d(\hat{s}_d)=o_p(\hat{s}_p)$. Indeed, \citet{ddt15} use their framework to identify and certify optimal primal solutions (revenue-maximizing mechanisms) by identifying such pairs $(\hat{s}_p,\hat{s}_d)$.
To facilitate finding such pairs of solutions, they identify \emph{complementary slackness} conditions, that is, conditions on $s_p$ and $s_d$ that are necessary and sufficient for the inequality in \cref{overview-weak-dual} to in fact be an equality as in \cref{overview-tight-dual}. In particular, for the revenue maximization problem where the two items are sampled i.i.d.\ from the Beta distribution $F=B(1,2)$, they identify such a pair of solutions $(\hat{s}_p,\hat{s}_d)$. They in fact show the that complementary slackness conditions (for this distribution) uniquely define the optimal primal solution and that this solution has an infinite menu-size.

The main idea of our proof is, with the optimal dual $\hat{s}_d$ that \citet{ddt15} identify in hand, to carefully show that for every primal solution $s_p$ \emph{that is a mechanism with small menu-size}, these complementary slackness conditions not only fail to hold for $(s_p,\hat{s}_d)$ (as follows from the result of \citealp{ddt15}), but in fact are sufficiently violated to yield the required separation, that is,
\begin{equation}\label{overview-slackness-violation}
o_p(s_p)< o_d(\hat{s}_d)-\varepsilon = o_p(\hat{s}_p)-\varepsilon,
\end{equation}
where $\hat{s}_p$ is the optimal primal solution and the equality is by \cref{overview-tight-dual}. (See \cref{slackness-violation} below for the full details.)

In slightly more detail, \citet{ddt15} show that for two items sampled i.i.d.\ from $F=B(1,2)$, the complementary slackness conditions dictate that in a certain part of the value space, the set of values of buyers to which Good $2$ (say) is allocated with probability~$1$ by the optimal mechanism and the set of values of buyers to which it is allocated with probability $0$ in that mechanism are separated by a strictly concave curve (the curve $S$ in \cref{gamma} below) and that this implies that the optimal mechanism has an infinite menu-size. Another way to state this conclusion is to observe that it follows from known properties that in a mechanism with a finite menu-size such a curve must be piecewise-linear rather than strictly concave, and so to conclude that the optimal mechanism cannot have a finite menu-size. Roughly speaking, we relate the loss in revenue (compared to the optimal mechanism) of a given mechanism to a certain metric (see below) of the region between the separating curve $S$ of the optimal mechanism and an analogue of this curve (see below for the precise definition) in the given mechanism. We then observe that this analogue of $S$ is a piecewise-linear curve with number of pieces at most the menu-size of the given mechanism, and use this to appropriately lower-bound this metric (see \cref{approx} below) for mechanisms with small menu-size. A lower bound on the loss in revenue for mechanisms with such a menu-size follows.

To present our analysis in further detail, we must now first dive into some of the details of the optimal-transport duality framework of \citet{ddt15}.

\subsection{Minimal Needed Essentials of the Optimal-Transport Duality Framework of \texorpdfstring{\citet{ddt15}}{DDT'15}, and Commentary}\label{sec:lower-ddt}

We now present only the minimal amount of detail from the extensive analysis of \citet{ddt15} that is required to follow our proof; the interested reader is referred to their paper or to the excellent survey of \citet{d15}, whose notation we follow, for the full details that lie beyond the scope of this paper. (See also \citet{gk14} for a slightly different duality approach, and \citet{kf16} for an extension.)

In their analysis, \citet{ddt15} identify a signed Radon measure\footnote{To understand our proof there is no need to be familiar with the general definition of a signed Radon measure. It suffices to know that signed Radon measures generalize distributions that are defined by a combination of atoms and a density function, and allow in particular for \ \ a)~densities (and atoms) that can also be negative (hence the term \emph{signed}), and\ \ b)~the overall measure not necessarily summing up to $1$.}  $\mu$ on $[0,1]^2$ with $\mu\bigl([0,1]^2\bigr)=0$,\footnote{That is, the overall measure sums up to $0$.} such that for a mechanism with utility function\footnote{The utility function of a mechanism maps each buyer type (i.e., pair of valuations) to the utility that a buyer of this type obtains from participating in the mechanism.} $u$, the expected revenue of this mechanism from $F^2$ is equal to\footnote{This \emph{Lebesgue integral} is the measure-theoretic analogue of the expectation (or average) of $u$ with respect to a given distribution (but as $\mu$ is not a distribution, when ``averaging,'' some values are taken with negative weights, and weights do not sum up to $1$).}
\begin{equation}\label{max-u}
\int_{[0,1]^2}ud\mu.
\end{equation}
They show that (the utility function of) the revenue-maximizing mechanism is obtained by maximizing \cref{max-u} subject to $u(0,0)=0$, to $u$ being convex, and to $\forall x,y\in[0,1]^2: u(x)-u(y)\le\bigl|(x-y)_+\bigr|_1$, where $\bigl|(x-y)_+\bigr|_1=\sum_i\max(0,x_i-y_i)$. (\citet{r87} has shown that the utility function of any truthful mechanism satisfies the latter two properties as well as $u(0,0)\ge0$. An equality as in the first property may be assumed without losing revenue or changing the menu-size.)

We comment that while one could have directly attempted prove \cref{lower} by analyzing how much revenue is lost in \cref{max-u} due to restricting attention only to $u$ that corresponds to a mechanism with a certain (small) menu-size (in particular, the graph of such $u$ is a maximum of $C$ planes, where $C$ is the menu-size), such an analysis, even if successful, would have been hard and involved, and immensely tailored to the specifics of the distribution~$F^2$, due to the complex definition of $\mu$. For this reason we base our analysis on the duality-based framework of \citet{ddt13,ddt15}, which they have developed to help find and certify the optimal $u$, and we show, for the first time to the best of our knowledge, how to use this framework to quantitatively reason about the revenue loss from suboptimal mechanisms. The resulting approach is principled, general, and robust.\footnote{For example, readers familiar with the definition of \emph{exclusion set} mechanisms \citep{ddt15} may notice that our analysis of $F^2$ below can be readily applied with virtually no changes also to other distributions for which the optimal mechanism is derived from an exclusion set that is nonpolygonal (as $\mathcal{Z}$ is in the analysis below).}

\enlargethispage{.657em}
\citet{ddt15} show that for every utility function~$u$ of a (truthful) mechanism for valuations in $[0,1]^2$ and for every coupling\footnote{Informally (and sufficient to understand our proof), a coupling $\gamma$ of two unsigned Radon measures $\mu_1$ and $\mu_2$ both having the same overall measure is a recipe for rearranging the mass of $\mu_1$ into the mass of $\mu_2$ by specifying where each piece of (positive) mass is transported. Formally, a coupling $\gamma$ of two unsigned Radon measures $\mu_1$ and $\mu_2$ on $[0,1]^2$ with $\mu_1\bigl([0,1]^2\bigr)=\mu_2\bigl([0,1]^2\bigr)$ is an unsigned Radon measure on $[0,1]^2\times[0,1]^2$ whose marginals are $\mu_1$ and $\mu_2$, i.e., for every measurable set $A\subseteq[0,1]^2$, it holds that
$\gamma\bigl(A\times[0,1]^2\bigr)=\mu_1(A)$ and $\gamma\bigl([0,1]^2\times A\bigr)=\mu_2(A)$.}
$\gamma$ of $\mu'_+$ and $\mu'_-$, where $\mu'=\mu'_+-\mu'_-$ is any measure that convex-dominates\footnote{A distribution $\mu'$ convex-dominates a distribution $\mu$ if $\mu'$ is obtained from $\mu$ by shifting mass to coordinate-wise larger points and by performing mean-preserving spreads of positive mass. To follow our paper only a single property of convex dominance is needed --- see below. As \citet{ddt15} show (but not required for our proof), of interest in this context are in fact only cases where $\mu'$ is obtained from~$\mu$ by mean-preserving spreads of positive mass.} $\mu$, it is the case that\footnote{Once again, the Lebesgue integral on the right-hand side is the measure-theoretic analogue of the average of the values $\bigl|(x-y)_+\bigr|_1$, where, informally, each pair $(x,y)$ is taken with weight equal to the amount of (positive) mass transported by $\gamma$ from $x$ to~$y$.}
\begin{equation}\label{weak-dual}
\int_{[0,1]^2}ud\mu\le
\int_{[0,1]^2\times[0,1]^2}\bigl|(x-y)_+\bigr|_1d\gamma(x,y).\pagebreak 
\end{equation}
(This is precisely \cref{overview-weak-dual} in full detail.)
They identify the optimal mechanism $\hat{M}$ for the distribution $F^2$ by finding a measure\footnote{For this specific distribution $F^2$, the measure $\hat{\mu}'$  that they identify in fact equals $\mu$.}~$\hat{\mu}'$ and a coupling $\hat{\gamma}$ of $\hat{\mu}'_+$ and $\hat{\mu}'_-$, such that \cref{weak-dual} holds with an equality for $u=\hat{u}$ (the utility function of $\hat{M}$) and $\gamma=\hat{\gamma}$ (this is precisely \cref{overview-tight-dual} in full detail).\footnote{In fact, \citet{ddt15} prove a beautiful theorem that states that this (i.e., finding suitable $\mu'$ and $\gamma$ such that \cref{weak-dual} holds with an equality for the optimal~$u$) can be done for any underlying distribution, i.e., that this duality is \emph{strong}.} To find $\hat{u}$ and $\hat{\gamma}$, they make use of \emph{complementary slackness conditions} that they identify, and make sure they are all completely satisfied. In our proof below, we will claim that for any utility function $u$ that corresponds to a mechanism with small menu-size, the complementary slackness conditions with respect to $u$ and $\hat{\gamma}$ will be sufficiently violated so as to give sufficient separation between the left-hand side of \cref{weak-dual} for $u$ (that is, the revenue of the mechanism with small menu-size) and the right-hand side of \cref{weak-dual} for~$\hat{\gamma}$ (that is, the optimal revenue).

\looseness=-1
To better understand the complementary slackness conditions identified by \citet{ddt15}, let us review their proof for \cref{weak-dual}:
\begin{multline}\label{slackness}
\int_{[0,1]^2}ud\mu\le
\int_{[0,1]^2}ud\mu'=
\int_{[0,1]^2}ud(\mu'_+-\mu'_-)=
\int_{[0,1]^2\times[0,1]^2}\bigl(u(x)-u(y)\bigr)d\gamma(x,y)\le\\
\le\int_{[0,1]^2\times[0,1]^2}\bigl|(x-y)_+\bigr|_1d\gamma(x,y),
\end{multline}
where the first inequality is since $u$ is convex (this inequality is the only property of convex dominance that is needed to follow our paper), the second equality is by the definition of a coupling, and the second inequality is due to the third property of $u$ as defined above following \cref{max-u}. \citet{ddt15} note that if it would not be the case that $\gamma$-almost everywhere\footnote{Informally (and sufficient to understand our proof), for a condition to hold \emph{$\gamma$-almost everywhere} means for that condition to hold for every $x$ and $y$ such that the coupling~$\gamma$ transports (positive) mass from $x$ to $y$.} we would have $u(x)-u(y)=\bigl|(x-y)_+\bigr|_1$, then the second inequality in \cref{slackness} would be strict, and so the same proof would give a strict inequality in \cref{weak-dual}; they use this insight to guide their search for the optimal $\hat{u}$ and its tight dual~$\hat{\gamma}$. (They also perform a similar analysis with respect to the first inequality in \cref{slackness}, which we skip as we do not require it.) In our proof below we will show that for the coupling $\hat{\gamma}$ that they identify, and for any $u$ that corresponds to a mechanism with small menu-size, this constraint (i.e., that $\gamma$-almost everywhere $u(x)-u(y)=\bigl|(x-y)_+\bigr|_1$) will be significantly violated, in a precise sense. To do so, we first describe the measure $\hat{\mu}'$ and the coupling $\hat{\gamma}$ that they identify.

Examine \cref{gamma}.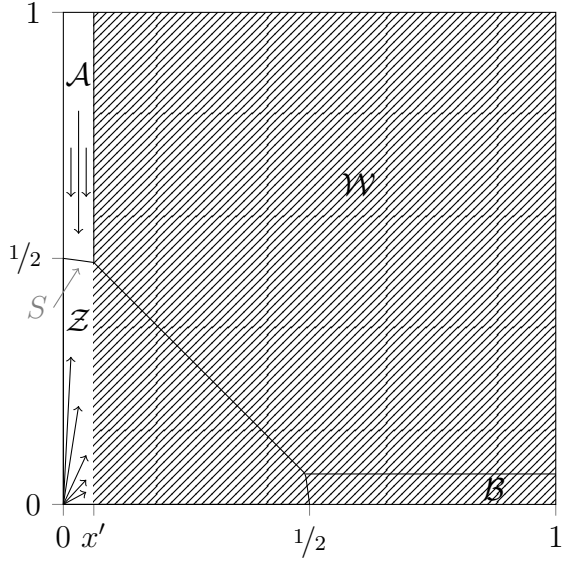
\begin{figure}%
\centering%
\begin{tikzpicture}
\begin{axis}[height=3.2in, width=3.2in, ymin=0, ymax=1, xmin=0, xmax=1, xtick pos=left, xtick align=outside, xtick={0,0.5,1}, xticklabels={$0$,$\nicefrac{1}{2}$, $1$}, ytick pos=bottom, ytick align=outside, ytick={0,0.5,1}, yticklabels={0,$\nicefrac{1}{2}$,$1$}, extra x ticks={0.06187679}, extra x tick label={\smash{$x'$}\vphantom{$0$}}, clip=false]
\addplot[color=black, mark=none, samples=200, domain=0:0.5] {min( (2-3*x)/(4-5*x), (2-4*x)/(3-5*x), 0.5534938-x)}\closedcycle; 
\def\xprime{0.06187679}
\draw (axis cs:.5534938-\xprime,\xprime,) -- (axis cs:1,0.06187679); 
\draw (axis cs:\xprime,.5534938-\xprime) -- (axis cs:0.06187679,1); 

\fill[pattern=north east lines] (axis cs:\xprime,1) -- (axis cs:1,1) -- (axis cs:1,0) -- (axis cs:\xprime,0) -- cycle;

\draw[<-,gray,overlay] (axis cs:\xprime/2,0.48) -- (-.02,.4) node[left,shift={(0.01,0)}] {$S$};
\node at (axis cs:0.875,0.03){$\mathcal{B}$};
\node at (axis cs:0.6,0.65){$\mathcal{W}$};
\node at (axis cs:0.03,0.375){$\mathcal{Z}$};

\node at (axis cs:0.03,0.875){$\mathcal{A}$};
\draw [->] (axis cs:\xprime/2,0.8) -- ++(axis cs:0,-0.25);
\draw [->] (axis cs:\xprime/4,0.725) -- ++(axis cs:0,-0.1);
\draw [->] (axis cs:\xprime*3/4,0.725) -- ++(axis cs:0,-0.1);

\draw[->] (axis cs:0,0) -- (axis cs:\xprime/4,.3);
\draw[->] (axis cs:0,0) -- (axis cs:\xprime/2,.2);
\draw[->] (axis cs:0,0) -- (axis cs:\xprime*3/4,.1);
\draw[->] (axis cs:0,0) -- (axis cs:\xprime*3/4,.05);
\draw[->] (axis cs:0,0) -- (axis cs:\xprime*3/4,.025);
\end{axis}
\end{tikzpicture}%
\caption{\label{gamma}The optimal coupling $\hat{\gamma}$ identified by \citet{ddt15} for $F^2$, illustrated in the region $R=[0,x']\times[0,1]$. \namecref{gamma} adapted from \citet{ddt15} with permission.}\end{figure}
For our proof below it suffices to describe the measure $\hat{\mu}'$ and the coupling $\hat{\gamma}$, both restricted to a region $R\eqdef[0,x']\times[0,1]$ for an appropriate $x'>0$.\footnote{We choose $x'$ as the horizontal-axis coordinate of the right boundary of the region denoted by $\mathcal{A}$ in \citet{ddt15}.} The measure $\hat{\mu}'$ has a point mass of measure~$1$ in $(0,0)$, and otherwise in every $(x_1,x_2)\in R\setminus\bigl\{(0,0)\bigr\}$ has density
\[
f(x_1)f(x_2)\left(\frac{1}{1-x_1}+\frac{1}{1-x_2}-5\right).
\]
In the region $\mathcal{A}$, the coupling $\hat{\gamma}$ sends positive mass downward, from positive-density points to negative-density points. In the region $\mathcal{Z}$, the coupling $\hat{\gamma}$ sends positive mass from $(0,0)$ upward and rightward to all points (the density is indeed negative throughout $\mathcal{Z}\setminus\bigl\{(0,0)\bigr\}$). No other positive mass is transported inside $R$ or into~$R$. (Some additional positive mass from $(0,0)$ is transported out of $R$.)

The optimal mechanism that \citet{ddt15} identify does not award any good (nor does it charge any price) in the region $\mathcal{Z}$, while awarding Good~$2$ with probability~$1$ and Good~$1$ with varying probabilities (and charging varying prices) in the region $\mathcal{A}$.

As \citet{ddt15} note, indeed $\hat{\gamma}$-almost everywhere the complementary slackness condition ${\hat{u}(x)-\hat{u}(y)}=\bigl|(x-y)_+\bigr|_1$ holds for this mechanism: in the region $\mathcal{A}$, coupled points~$x,y$ have ${x_1=y_1}$ and $x_2>y_2$, and in this region, $\hat{u}'_2=1$; in the region~$\mathcal{Z}$, coupled points have $x_i=0\le y_i$ and $\hat{u}(x)=\hat{u}(y)$. In fact, this reasoning shows that given the optimal coupling $\hat{\gamma}$, the utility function~$\hat{u}$ is uniquely defined by the complementary slackness conditions, and so is the unique revenue-maximizing utility function. Since it is well known that wherever a utility function $u(x_1,x_2)$ of a truthful mechanism is differentiable, its derivative in the direction of $x_i$ is the allocation probability of Good $i$ (indeed, by examining Good~$i=2$ one can verify using this property that the mechanism corresponding to $\hat{u}$ indeed awards Good $2$ with probability~$1$ in the region~$\mathcal{A}$ and with probability $0$ in the region~$\mathcal{Z}$), then by examining Good~$i=1$, since the curve $S$ (see \cref{gamma}) that separates the regions $\mathcal{Z}$ and~$\mathcal{A}$ is strictly concave, \citet{ddt15} conclude that there is a continuum of allocation probabilities of Good $1$ in the mechanism corresponding to $\hat{u}$ (which is the unique revenue-maximizing mechanism), thus concluding that the unique revenue-maximizing mechanism for the distribution $F^2$ has an infinite menu-size.

\subsection{Proof of Theorem~\ref{lower}}\label{sec:lower-proof}

An alternative way to state the conclusion of the argument of \citet{ddt15} that any revenue-maximizing mechanism for $F^2$ has an infinite menus-size is as follows: let $u$ be the utility function of a mechanism $M$ with a finite menu-size. It is well known that the graph of $u$ is the maximum of a finite number of planes (each corresponding to one entry in the menu of $M$). Therefore, since $S$ is strictly concave, it is impossible for $u$ to have derivative~$0$ beneath the curve $S$ and derivative $1$ above the curve $S$, and so the complementary slackness conditions must be violated, and hence $M$ is not optimal. In our proof we will quantify the degree of violation of the complementary slackness conditions as a function of the finite menu-size of such $u$. We would like to reason as follows: for such $u$ with a finite menu-size, define the corresponding curve~$T$ that is the analogue for $u$ of the curve $S$, and then show that since $T$ must be piecewise-linear, quantifiable revenue is lost due to the complementary slackness conditions not holding in the region between $S$ and $T$. It is not immediately clear how to define $T$, though.

Intuitively we would have liked to define $T$ to be the curve on $[0,x']$ above which $u$ awards Good $2$ with probability $1$ and below which $u$ awards Good $2$ with probability $0$, but what if $u$ also awards Good $2$ with fractional probability? How should we define $T$ in such cases? (Remember that all that we know about $u$ is that it has small menu-size.) As we will see below, to show that we indeed have quantifiably sufficient revenue loss from any deviation of $T$ from~$S$, we will define $T$ as the curve above which $u$ awards Good $2$ with probability more than one half, and below which $u$ awards Good~$2$ with probability less than or equal to one half. As will become clear from our calculations, the constant one half could have been replaced here with any fixed fraction,\footnote{Similarly, the direction of tie breaking with respect to the region where Good $2$ is awarded with probability precisely one half could have been flipped.} but crucially it could not have been replaced with $0$ (i.e., defining $T$ as the curve above which $u$ awards Good $2$ with positive probability and below which $u$ does not award Good $2$) or with $1$ (i.e., defining $T$ as the curve above which $u$ awards Good $2$ with probability $1$ and below which $u$ awards Good $2$ with probability strictly less than $1$).

We are now finally ready to prove \cref{lower}, with the help of the following \lcnamecref{approx}, which may be of separate interest. (We prove \cref{approx} following the proof of \cref{lower}.)

\begin{proposition}\label{approx}
Let $S:[0,x']\rightarrow\mathbb{R}$ be a concave function with radius of curvature at most~$r$ everywhere, for some $r<\infty$. For small enough $\delta$, the following holds. For any piecewise-linear function~${T:[0,x']\rightarrow\mathbb{R}}$ composed of at most $\frac{x'}{8\sqrt{r\delta}}$ linear segments, the Lebesgue measure of the set of coordinates $x_1$ with $|S(x_1)-T(x_1)|>\delta$ is at least $\nicefrac{x'}{2}$.
\end{proposition}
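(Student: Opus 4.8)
The plan is to reduce the proposition to a local estimate on a single linear piece of $T$. Write $g\eqdef S-T$. On the interior of each maximal interval $I$ on which $T$ is affine we have $T''=0$, so $g''=S''$; and ``radius of curvature at most $r$'' for the concave curve $S$ means precisely that $-S''=(1+(S')^2)^{3/2}\big/R_{\mathrm{curv}}\ge\nicefrac{1}{r}$, so $g$ is concave on $I$ with $g''\le-\nicefrac{1}{r}$ in its interior. The core of the argument is the estimate
\[
\bigl|\{x\in I:\ |g(x)|\le\delta\}\bigr|\ \le\ 4\sqrt{r\delta}\qquad\text{for every linear piece }I\text{ of }T.
\]
Granting this, and since $[0,x']$ is partitioned into at most $\frac{x'}{8\sqrt{r\delta}}$ such pieces (their finitely many common endpoints being Lebesgue-null), we get $\bigl|\{x\in[0,x']:\ |S(x)-T(x)|\le\delta\}\bigr|\le\frac{x'}{8\sqrt{r\delta}}\cdot4\sqrt{r\delta}=\nicefrac{x'}{2}$, and the proposition follows by passing to the complement.

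To prove the local estimate, fix a linear piece $I=[a,b]$ and let $x^*\in I$ maximize the concave function $g$ over $I$; then $g$ is non-decreasing on $I\cap(-\infty,x^*]$ and non-increasing on $I\cap[x^*,\infty)$. On each of these two monotone sub-intervals, $\{x:|g(x)|\le\delta\}$ is an interval (or empty), and it suffices to show any such interval $[\alpha,\beta]$ has length at most $2\sqrt{r\delta}$; I treat the non-decreasing piece, the other being symmetric. The function $\psi(x)\eqdef g(x)+\frac{(x-\beta)^2}{2r}$ satisfies $\psi''=g''+\nicefrac{1}{r}\le0$, hence is concave, so $\psi(\alpha)\le\psi(\beta)+\psi'_-(\beta)\,(\alpha-\beta)$, i.e.\ $g(\alpha)+\frac{(\beta-\alpha)^2}{2r}\le g(\beta)+g'_-(\beta)\,(\alpha-\beta)$. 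Here $g'_-(\beta)\ge0$ because $g$ is non-decreasing up to $\beta$, while $\alpha-\beta\le0$, so the last term is non-positive; combined with $g(\alpha)\ge-\delta$ and $g(\beta)\le\delta$ (which hold at the endpoints of $\{|g|\le\delta\}$) this gives $\frac{(\beta-\alpha)^2}{2r}\le g(\beta)-g(\alpha)\le2\delta$, i.e.\ $\beta-\alpha\le2\sqrt{r\delta}$. Adding the two monotone contributions yields the local estimate.

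The one delicate point is extracting the sharp constant. A cruder argument — within each linear piece $\{|g|\le\delta\}$ is a union of at most two intervals, and a sagitta estimate bounds each by $\frac{(\beta-\alpha)^2}{8r}\le\delta-\frac{g(\alpha)+g(\beta)}{2}\le2\delta$ — yields only $8\sqrt{r\delta}$ per piece, hence after summing the useless bound $\bigl|\{|S-T|\le\delta\}\bigr|\le x'$ rather than $\nicefrac{x'}{2}$. Splitting $I$ at the maximizer of $g$ and then expanding $\psi$ around the endpoint of each monotone piece at which $g$ is \emph{largest} — where the one-sided derivative has the sign that makes its contribution non-positive — is exactly what recovers the missing factor of two. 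Everything else is routine: concave functions possess the one-sided derivatives invoked above; ``$S$ is $C^2$ with $S''\le-\nicefrac{1}{r}$'' is merely a restatement of the radius-of-curvature hypothesis; and ``small enough $\delta$'' is needed only to make the hypothesis on $T$ non-vacuous (it suffices that $\delta\le\nicefrac{(x')^2}{64r}$, so that $\frac{x'}{8\sqrt{r\delta}}\ge1$; for larger $\delta$ no admissible $T$ exists and the statement holds trivially).
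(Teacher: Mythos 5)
Your proof is correct and follows the same overall decomposition as the paper: bound the contribution of each linear piece of $T$ to $\{|S-T|\le\delta\}$ by $4\sqrt{r\delta}$, then sum over at most $\frac{x'}{8\sqrt{r\delta}}$ pieces to conclude that the ``bad'' set has measure at most $\nicefrac{x'}{2}$. Where you part ways from the paper is in how you establish the local per-piece bound. The paper argues geometrically: it asserts that the extreme case is that of $S$ being a circular arc of radius $r$ with $T$ a horizontal chord of sagitta $2\delta$, and then reads off $m=\sqrt{16r\delta-16\delta^2}\le4\sqrt{r\delta}$ from the Intersecting Chords Theorem. You instead argue analytically: you split the piece at the maximizer of $g=S-T$, observe that on each monotone half the sublevel set $\{|g|\le\delta\}$ is a single interval $[\alpha,\beta]$, introduce the concave comparison function $\psi(x)=g(x)+\frac{(x-\beta)^2}{2r}$, and use the supergradient inequality at $\beta$ together with the sign of $g'_-(\beta)$ (which is where the factor-of-two savings over the crude bound comes from) to get $\beta-\alpha\le 2\sqrt{r\delta}$. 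Your route is more rigorous: the paper's ``necessary condition'' that the good set be dominated by a chord of sagitta $2\delta$ is left as a geometric assertion, and in particular the paper does not explicitly address the fact that $\{|g|\le\delta\}$ within one linear piece may be two disjoint intervals — your split at the maximizer handles this cleanly and makes the extremal claim precise. The paper's version buys geometric intuition and a slightly sharper intermediate constant ($\sqrt{16r\delta-16\delta^2}$ vs.\ $4\sqrt{r\delta}$), but both collapse to the same $4\sqrt{r\delta}$ bound, and your argument only needs the weaker consequence $S''\le-\nicefrac{1}{r}$ of the radius-of-curvature hypothesis, which is all the proposition actually uses.
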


\begin{proof}[Proof of \cref{lower}]
The curve $S$ (see \cref{gamma}) that separates the regions $\mathcal{Z}$ and $\mathcal{A}$ is given by $x_2=\frac{2-3x_1}{4-5x_1}$ (where $x_1\in[0,x']$) \citep{ddt15}. Therefore, it is strictly concave, having radius of curvature at most $r$ everywhere, for some fixed $r<\infty$. We note also that there exists a constant $d>0$ and a neighborhood~$N$ of the curve~$S$ in~$R$ in which the density of $\hat{\mu}'$ is (negative and) smaller than $-d$.

Let $\varepsilon>0$ and set $\delta\eqdef\sqrt{\frac{8\varepsilon}{x'\cdot d}}$. Assume without loss of generality that $\varepsilon$ is sufficiently small so that both\ \ i)~the $\delta$-neighborhood of $S$ in $R$ is contained in the neighborhood~$N$ of~$S$, and\ \ ii)~\cref{approx} holds with respect to $\delta$. Let $C\eqdef\frac{x'}{8\sqrt{r\delta}}=\Omega(\nicefrac{1}{\sqrt[4]{\varepsilon}})$.

Let $u$ be the utility function of a mechanism $M$ with menu-size at most $C$, and let $T:[0,x']\rightarrow[0,1]$ be defined as follows:
\[
T(x_1)\eqdef
\inf\bigl\{x_2\in[0,1]~\big|~u'_2(x_1,x_2)>\nicefrac{1}{2}\bigr\}=
\sup\bigl\{x_2\in[0,1]~\big|~u'_2(x_1,x_2)\le\nicefrac{1}{2}\bigr\}.
\]
 It is well known that the graph of $u$ is the maximum of $C$ planes. Therefore, $T$ is a piecewise-linear function composed of at most $C$ linear segments.

Let $y_1\in[0,x']$ with $T(y_1)-S(y_1)>\delta$. Let $y_2\in\bigl(S(y_1),S(y_1)\!+\!\nicefrac{\delta}{2}\bigr)$ and let $x_2$ be such that $\hat{\gamma}$ transfers positive mass from $x\eqdef(y_1,x_2)$ to $y\eqdef(y_1,y_2)$. (All mass transferred to $y$ by~$\hat{\gamma}$ is from points of this form.) We note that by definition of $T$,
\[
u(x)-u(y)\le
x_2-y_2-\nicefrac{\delta}{4}=
\bigl|(x-y)_+\bigr|_1-\nicefrac{\delta}{4}.
\]

Similarly, let $y_1\in[0,x']$ with $S(y_1)-T(y_1)>\delta$. Let $y_2\in{\bigl(S(y_1)\!-\!\nicefrac{\delta}{2},S(y_1)\bigr)}$. Note that~$\hat{\gamma}$ transfers positive mass from $x\eqdef(0,0)$ to $y\eqdef(y_1,y_2)$. (All mass transferred to $y$ by~$\hat{\gamma}$ is from the point $x$.) We note that by definition of $T$,
\[
u(x)-u(y)\le
-\nicefrac{\delta}{4}=
\bigl|(x-y)_+\bigr|_1-\nicefrac{\delta}{4}.
\]

By \cref{approx}, the Lebesgue measure of the set of coordinates $y_1$ with $\bigl|S(y_1)-T(y_1)\bigr|>\delta$ is at least $\nicefrac{x'}{2}$.
Similarly to \cref{slackness}, we therefore obtain
\begin{multline}\label{slackness-violation}
\Rev_{F^2}(M)=\\=
\int_{[0,1]^2}ud\mu\le
\int_{[0,1]^2}ud\hat{\mu}'=
\int_{[0,1]^2}ud(\hat{\mu}'_+-\hat{\mu}'_-)=
\int_{[0,1]^2\times[0,1]^2}\bigl(u(x)-u(y)\bigr)d\hat{\gamma}(x,y)\le\\\le
\int_{[0,1]^2\times[0,1]^2}\bigl|(x-y)_+\bigr|_1d\hat{\gamma}(x,y)-\nicefrac{\delta}{4}\cdot\nicefrac{\delta}{2}\cdot\nicefrac{x'}{2}\cdot d=
\int_{[0,1]^2}\hat{u}d\mu-\nicefrac{\delta}{4}\cdot\nicefrac{\delta}{2}\cdot\nicefrac{x'}{2}\cdot d=\\=
\OPT(F^2)-\nicefrac{\delta}{4}\cdot\nicefrac{\delta}{2}\cdot\nicefrac{x'}{2}\cdot d=
\OPT(F^2)-\nicefrac{\varepsilon}{2}<
\OPT(F^2)-\varepsilon,
\end{multline}
as required (this is precisely \cref{overview-slackness-violation} in full detail).
\end{proof}

\begin{proof}[Proof of \cref{approx}]
We will show that from each linear segment of $T$, at most a Lebesgue measure $4\sqrt{r\delta}$ of coordinates $x_1$ satisfy
\begin{equation}\label{chord}
\bigl|S(x_1)-T(x_1)\bigr|\le\delta.
\end{equation}
This implies the \lcnamecref{approx} since this means that from all linear segments of $T$ together, at most a Lebesgue measure $\frac{x'}{8\sqrt{r\delta}}\cdot4\sqrt{r\delta}=\frac{x'}{2}$ of coordinates  $x_1$ satisfy \cref{chord}, and hence at least a Lebesgue measure $\frac{x'}{2}$ of coordinates $x_1$ satisfy $\bigl|S(x_1)-T(x_1)\bigr|>\delta$, as required.

For a Lebesgue measure $m$ of coordinates from a single linear segment of $T$ to satisfy \cref{chord}, we note that a necessary condition is that $m$ be at most the length of a chord of sagitta at most~$2\delta$ in a circle of radius at most~$r$. (See \cref{sagitta}.)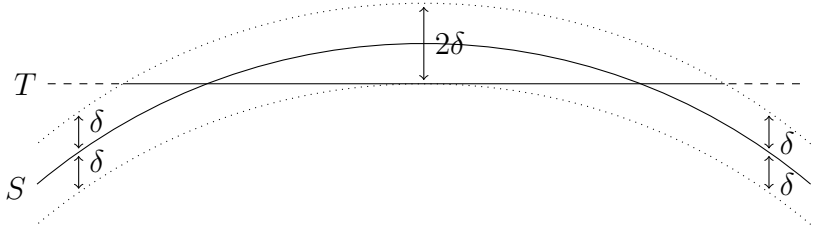
\begin{figure}%
\centering%
\begin{tikzpicture}
\def\sagitta{1.07179676972}
\draw[dotted] (50:8) arc (50:130:8);
\draw (50:8) ++(0,-\sagitta/2) arc (50:130:8) node[left,shift={(0,-0.05)}] {$S$};
\draw[dotted] (50:8) ++(0,-\sagitta) arc (50:130:8);
\draw (-4,8-\sagitta) -- (4,8-\sagitta);
\draw[dashed] (-5,8-\sagitta) -- (-4,8-\sagitta) ++(-1,0) node[left] {$T$};
\draw[dashed] (5,8-\sagitta) -- (4,8-\sagitta);
\draw[<->] (0,8.05-\sagitta) -- (0,7.95);
\draw (0,8-\sagitta/2) node[right] {$2\delta$};
\draw[<->] (125:8) ++(0,-0.05) -- ++(0,-\sagitta/2+0.1);
\draw (125:8) ++ (0,-\sagitta/4+.15) node[right] {$\delta$};
\draw[<->] (125:8) ++(0,-\sagitta/2-0.05) -- ++(0,-\sagitta/2+0.1);
\draw (125:8) ++ (0,-\sagitta*3/4+.15) node[right] {$\delta$};
\draw[<->] (55:8) ++(0,-0.05) -- ++(0,-\sagitta/2+0.1);
\draw (55:8) ++ (0,-\sagitta/4-.15) node[right] {$\delta$};
\draw[<->] (55:8) ++(0,-\sagitta/2-0.05) -- ++(0,-\sagitta/2+0.1);
\draw (55:8) ++ (0,-\sagitta*3/4-.15) node[right] {$\delta$};
\end{tikzpicture}%
\caption{\label{sagitta}In the extreme case where $S$ is an arc of radius $r$, a line segment with the maximum Lebesgue measure of coordinates satisfying \cref{chord} is a horizontal chord of sagitta~$2\delta$ in a circle (the circle of the top dotted arc) of radius~$r$.}\end{figure}
We claim that this implies that $m\le\sqrt{16r\delta-16\delta^2}\le4\sqrt{r\delta}$, as required. Indeed, in the extreme case where $m$ is the length of a chord of sagitta precisely~$2\delta$ in a circle of radius precisely~$r$, we have by a standard use of the Intersecting Chords Theorem\footnote{The Intersecting Chords Theorem states that when two chords of the same circle intersect, the product of the lengths of the two segments (that are delineated by the intersection point) of one chord equals the product of the lengths of the two segments of the other.} that $(\frac{m}{2})^2=(2r-2\delta)\cdot2\delta$. Solving for~$m$, we have that (in the extreme case) $m=\sqrt{16r\delta-16\delta^2}$, as claimed.
\end{proof}

\section{Upper Bound}\label{sec:upper}

Recall that \cref{hn}, due to \citet{hn13}, provides an upper bound of $O(\nicefrac{1}{\varepsilon^4})$ on the menu-size of some mechanism that maximizes revenue up to an additive $\varepsilon$. Their proof uses virtually no information regarding the structure of the optimal mechanism: it starts with a revenue-maximizing mechanism, and cleverly rounds two of the three coordinates (allocation probability of Good~$1$, allocation probability of Good~$2$, price) of every outcome, to obtain a mechanism with small menu-size without significant revenue loss. We will follow a similar strategy, but will only round one of these three coordinates (namely, the price), using a result by \citet{p11} that shows that under an assumption on distributions that is standard in the economics literature on multidimensional mechanism design \citep[see, e.g.,][]{mm88,mv06,p11}, for an appropriate choice of revenue-maximizing mechanism, one of the other (allocation) coordinates is in fact already rounded (specifically, it is either zero or one).

\begin{definition}[\citealp{mm88}]\label{hazard}
A distribution $F\in\Delta\bigl([0,1]^n\bigr)$ is said to satisfy the \emph{McAfee-McMillan hazard condition} if it has a differentiable density function $f$ satisfying
\[
(n+1)f(x)+x\cdot\nabla f(x)\ge0
\]
for every $x=(x_1,\ldots,x_n)\in[0,1]^n$.
\end{definition}

\begin{theorem}[\citealp{p11}]\label{exposed}
For every distribution $F\in\Delta\bigl([0,1]^2\bigr)$ satisfying the McAfee-McMillan hazard condition (for $n=2$), there exists a mechanism $M$ that maximizes the revenue from $F$ and has no outcome for which both allocation probabilities are in the open interval $(0,1)$.
\end{theorem}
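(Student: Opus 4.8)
The plan is to turn revenue maximization into a \emph{linear} optimization over allocation fields, solve it pointwise after dropping the convexity constraint, and then use the McAfee--McMillan condition to show that the pointwise solution happens to be the gradient of a convex function, hence genuinely feasible and optimal --- with a structure that forbids randomizing both goods except on a Lebesgue-null set of types.

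First I would use the convex-analytic reformulation recalled after \eqref{max-u}: a mechanism for $[0,1]^2$ is a nonnegative convex $u$ with $u(0,0)=0$ and $q\eqdef\nabla u\in[0,1]^2$ a.e.\ (nonnegativity of $u$ then being automatic, since $q\ge0$ and $u(0,0)=0$), and $\Rev_F(M)=\int_{[0,1]^2}(x\cdot q-u)f\,dx$. Applying the divergence theorem to $x\mapsto u(x)\,xf(x)$ rewrites this as $\sum_i\int_{\{x_i=1\}}u\,f\,dS-\int_{[0,1]^2}u\bigl(3f+x\cdot\nabla f\bigr)dx$ (for $n=2$), so the McAfee--McMillan condition is exactly the statement that the bulk integrand $3f+x\cdot\nabla f$ is nonnegative. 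Substituting $u(x)=\int_0^1 q(sx)\cdot x\,ds$ and reparametrizing each ray from the origin turns the whole objective into $\int_{[0,1]^2}q(y)\cdot W(y)\,dy$ for an explicit weight field $W$ assembled from $f$, $\nabla f$ and the upper faces. Revenue is thus linear in $q$, subject only to $q\in[0,1]^2$ pointwise together with $q$ being curl-free and monotone (the latter two constraints being equivalent to convexity of $u$).

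Dropping the curl-free and monotonicity constraints, the relaxed objective $\int q\cdot W$ is maximized pointwise and coordinatewise by the bang-bang rule $q_i(y)=1$ if $W_i(y)>0$, $q_i(y)=0$ if $W_i(y)<0$, with $q_i(y)$ free in $[0,1]$ only on the switching set $\{W_i=0\}$. Hence in the relaxed optimum both allocation coordinates are deterministic off $\{W_1=0\}\cup\{W_2=0\}$, at most one is randomized on that union, and both are randomized only on $\{W_1=0\}\cap\{W_2=0\}$. The role of the hazard condition is to make $W$ regular enough that (a) $\{W_1=0\}\cap\{W_2=0\}$ is Lebesgue-null, and (b) the bang-bang field $q^\star$ defined by $\operatorname{sign}W$, with an appropriate choice of the free coordinate along the two switching curves, is monotone and curl-free, i.e.\ $q^\star=\nabla u^\star$ for a convex $u^\star$; this is precisely the step that fails for distributions like $F^2=B(1,2)^2$ of \cref{sec:lower}, whose transformed density $f_1f_2\bigl(3-\tfrac{x_1}{1-x_1}-\tfrac{x_2}{1-x_2}\bigr)$ changes sign near the corner $(1,1)$ and lets the true optimum randomize both goods on a continuum. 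Granting (a)--(b), $u^\star$ is feasible for the original problem and attains the relaxed value, hence is revenue-maximizing, and by construction it randomizes both goods only on a null set $Z_0\subseteq\{W_1=0\}\cap\{W_2=0\}$ of types. To state the conclusion in terms of \emph{outcomes} (menu entries) rather than types, I would finally prune the menu of $M^\star$ to those entries selected by a positive measure of types (which changes neither revenue nor incentive compatibility); a surviving outcome with both allocation probabilities in $(0,1)$ would have to be chosen on a positive-measure subset of $Z_0$, which is impossible.

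The main obstacle is exactly (a)--(b): translating the differential inequality ``$3f+x\cdot\nabla f\ge0$'' into monotonicity of the integrated weight field $W$ and of the resulting bang-bang allocation --- essentially an ``ironing in two dimensions'' computation, which requires carefully tracking how the radial reparametrization aggregates the nonpositive bulk density against the positive face contributions and showing each $\operatorname{sign}W_i$ is monotone along the direction that matters. A secondary difficulty is the behaviour on the switching curves themselves: one must supply a consistent choice of the randomized coordinate there so that $q^\star$ remains globally curl-free and monotone, which I would obtain by solving along each curve the one-dimensional continuity condition that pins down $u^\star$ once its gradient is prescribed off the curve.
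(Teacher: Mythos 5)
First, a framing point: the paper does not prove this statement at all --- \cref{exposed} is a black-box citation to Pavlov (2011), so there is no ``paper's own proof'' to compare against. What you have written is an attempted reconstruction of Pavlov's argument, and it needs to be judged on its own.

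Your setup is sound as far as it goes: the divergence-theorem rewriting $\Rev_F(M)=\sum_i\int_{\{x_i=1\}}uf\,dS-\int u\bigl((n+1)f+x\cdot\nabla f\bigr)dx$ is correct (for $n=2$ the bulk kernel is $3f+x\cdot\nabla f$, which the McAfee--McMillan condition makes nonnegative), and passing to a ray-parametrized linear functional $\int q\cdot W$ in the allocation field is a legitimate manoeuvre. But the argument then rests entirely on your claims (a)--(b): that the pointwise bang-bang maximizer of the relaxed problem is $\gamma$-a.e.\ well-defined and, after a suitable choice on the switching sets, is the gradient of a convex function. You explicitly flag this as ``the main obstacle'' and offer no argument for it; this is not a peripheral detail but \emph{is} the theorem, so the proposal as written is a restatement of the goal rather than a proof.

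Worse, I believe the bang-bang route cannot work in the form you describe. If $q^\star=(\mathbf{1}[W_1>0],\mathbf{1}[W_2>0])$ were feasible, you would have proved that the optimal mechanism is \emph{deterministic} up to a Lebesgue-null set of types, i.e.\ that after pruning the menu every surviving outcome lies in $\{0,1\}^2$. Pavlov's theorem is strictly weaker: it permits one coordinate to take values throughout $(0,1)$ on a positive-measure set of types (this is exactly the ``boundary of the unit square'' structure that appears, e.g., in the $B(1,2)^2$ example of \cref{sec:lower}, where region $\mathcal{A}$ has allocations $(p_1,1)$ with $p_1$ ranging over a continuum). For your scheme to reproduce such a mechanism you would need $\{W_i=0\}$ to have positive two-dimensional Lebesgue measure --- a degenerate situation that the MM condition does not deliver in general. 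The discrepancy signals that the pointwise relaxation is simply not tight: the curl-free and monotonicity constraints genuinely bind, and the constrained optimum acquires its one-dimensional randomization \emph{because} of them, not in spite of them. Pavlov's own argument is structural (roughly, an extreme-point/perturbation argument that moves an interior allocation $(p_1,p_2)\in(0,1)^2$ toward the boundary of $[0,1]^2$ while weakly increasing revenue, using the hazard condition to sign the effect), and it never passes through an unconstrained pointwise solution. If you want to pursue the relaxation route, you would at minimum have to replace ``bang-bang is feasible'' by an argument that the optimal feasible $q$ projects onto $\partial([0,1]^2)$ a.e.\ --- which is a different and substantially harder claim than what you wrote, and one for which the MM inequality on the bulk kernel alone is not evidently sufficient.
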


Our upper bound follows by combining the argument of \citet{hn13} with \cref{exposed}.

\begin{proof}[Proof of \cref{upper}]
Let $M$ be a revenue-maximizing mechanism for $F$ as in \cref{exposed}.
Let $\delta\eqdef\varepsilon^2$, and for every real number $t$, denote by $\deltafloor{t}$  the rounding-down of $t$ to the nearest integer multiple of~$\delta$.
Let $M'$ be the mechanism whose menu is comprised of all outcomes of the form $(p_1,p_2;(1\!-\!\varepsilon)\!\cdot\!\deltafloor{t})$ for every outcome $(p_1,p_2;t)$ of $M$ (where $p_i$ is the allocation probability of Good~$i$, and $t$ is the price charged in this outcome). We claim that $\Rev_F(M')>(1-\varepsilon)\cdot\Rev_F(M)-\varepsilon\ge\Rev_F(M)-2\varepsilon$.

The main idea of the ``nudge and round'' argument of \citet{hn13} is that while the rounding (which is performed to reduce the menu-size --- see below), by itself (without the discounting, which is the ``nudge'' part), could have hypothetically constituted the ``last straw'' that causes some buyer type to switch from preferring a very expensive outcome to preferring a very cheap one (thus significantly hurting the revenue), since more expensive outcomes are more heavily discounted, then this compensates for any such ``last straw'' effects. More concretely, while the rounding, before the $\varepsilon$-discounting, can cause a buyer's utility from any outcome to shift by at most $\varepsilon^2$ (which could be the ``last straw''), and since for any outcome whose price is cheaper by more than an $\varepsilon$ compared to the buyer's original outcome of choice the given discount is smaller by at least $\varepsilon^2$, this smaller discount more than eliminates any potential utility gain due to rounding, so such an outcome would not become the most-preferred one. We will now formally show this.

Fix a type $x=(x_1,x_2)\in[0,1]^2$ for the buyer. Let $e$ be the outcome according to~$M$ when the buyer has type $x$. It is enough to show that the buyer pays at least $(1\!-\!\varepsilon)t_e-\varepsilon$ according to $M'$ when she has type $x$. (We denote the price of, e.g., $e$ by $t_e$.) Let $f'$ be a possible outcome of $M'$, and let~$f$ be the outcome of $M$ that corresponds to it. We will show that if $(1\!-\!\varepsilon)\deltafloor{t_f}<(1\!-\!\varepsilon)(\deltafloor{t_e}-\varepsilon)$, then a buyer of type $x$ strictly prefers the outcome $e'$ of $M'$ that corresponds to $e$ over~$f'$ (and so does not choose $f'$ in $M'$). Indeed, since in this case $\deltafloor{t_f}<\deltafloor{t_e}-\varepsilon$, we have that
\begin{multline*}
u_{x}(e')\ge
u_{x}(e)+\varepsilon\cdot \deltafloor{t_e}\ge
u_{x}(f)+\varepsilon\cdot \deltafloor{t_e}>
u_{x}(f')-\delta-\varepsilon\cdot\deltafloor{t_f}+\varepsilon\cdot \deltafloor{t_e}=\\=
u_{x}(f')-\delta+\varepsilon\cdot\bigl(\deltafloor{t_e}-\deltafloor{t_f}\bigr)>
u_{x}(f')-\delta+\varepsilon\cdot \varepsilon=
u_{x}(f'),
\end{multline*}
so in $M'$, a buyer of type $x$ pays at least
\[
(1\!-\!\varepsilon)(\deltafloor{t_e}-\varepsilon)>
(1\!-\!\varepsilon)(t_e-\delta-\varepsilon)=
(1-\varepsilon)(t_e-\varepsilon^2-\varepsilon)>
(1-\varepsilon)t_e-\varepsilon.
\]
How many menu entries does $M'$ really have (i.e., how many menu entries ever get chosen by the buyer)? The number of menu entries $(p_1,p_2;t)$ with $p_1=1$ is at most $O(\nicefrac{1}{\varepsilon^2}$), since for every price $t$ (there are $O(\nicefrac{1}{\varepsilon^2})$ such options) we can assume without loss of generality that only the menu entry $(p_1,p_2;t)$ with highest $p_2$ will ever be chosen.\footnote{If a maximum such $p_2$ is not attained, then we can add a suitable menu entry with the supremum of such~$p_2$; see \citet{bgn17} for a full argument.} A similar argument for the cases $p_1=0$, $p_2=0$, and $p_2=1$ (by \cref{exposed}, no more cases exist beyond these four) shows that in total there really are at most  $O(\nicefrac{1}{\varepsilon^2})$ menu entries in~$M'$, as required.
\end{proof}

We note that both \citet{p11} and \citet{kf16} conjecture that the conclusion of \cref{exposed} holds under more general conditions than the McAfee-McMillan hazard condition. An affirmation of (either of) these conjectures would, by the above proof, immediately imply that the conclusion of \cref{upper} holds under the same generalized assumptions.

\citet{hn13} also analyze the scenario of a two-good distribution supported on $[1,H]^2$ for any given $H$, and give an upper bound of $O(\nicefrac{\log^2 H}{\varepsilon^5})$ on the menu-size that suffices for revenue maximization up to a \emph{multiplicative} $\varepsilon$. Using the above techniques, their argument could be similarly modified to give an improved upper bound of $O(\nicefrac{\log H}{\varepsilon^2})$ in that setting for distributions $F\in\Delta\bigl([1,H]^2\bigr)$ satisfying the McAfee-McMillan hazard condition (or any generalized condition under which the conclusion of \cref{exposed} holds).

\section{Extensions}\label{extensions}

\subsection{More than Two Goods}

Recall that \cref{cc} concludes, from our menu-size lower bound (\cref{lower}) and the menu-size upper bound of \citet{hn13} (\cref{hn}), a tight bound on the minimum deterministic communication complexity guaranteed to suffice for running some up-to-$\varepsilon$ revenue-maximizing mechanism for selling two goods, thereby completely resolving this problem. In fact, since \citet{hn13} prove an upper bound of $(\nicefrac{n}{\varepsilon})^{O(n)}$ (later strengthened to $(\nicefrac{\log n}{\varepsilon})^{O(n)}$ by \citealp{dhn14}) on the menu-size required for revenue maximization up to an additive $\varepsilon$ when selling any number of goods $n$, we obtain our tight communication-complexity bound not only for two goods, but for any fixed number of goods~$n\ge2$:

\begin{corollary}[Communication Complexity: Tight Bound for Any Number of Goods]\label{cc-add}
Fix $n\ge2$. There exists $D_n(\varepsilon)=\Theta(\log\nicefrac{1}{\varepsilon})$ such that for every $\varepsilon>0$ it is the case that $D_n(\varepsilon$) is the minimum communication complexity that satisfies the following: for every distribution $F\in\Delta\bigl([0,1]^n\bigr)$ there exists a mechanism~$M$ for selling $n$ goods such that the deterministic communication complexity of running~$M$ is $D_n(\varepsilon$) and such that $\Rev_F(M)>\OPT(F)-\varepsilon$. This continues to hold even if $F$ is guaranteed to be a product distribution.
\end{corollary}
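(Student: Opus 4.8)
The plan is to prove the two matching bounds $D_n(\varepsilon)=O(\log\nicefrac{1}{\varepsilon})$ and $D_n(\varepsilon)=\Omega(\log\nicefrac{1}{\varepsilon})$ separately, in each case passing between communication complexity and menu-size via the result of \citet{bgn17} that the deterministic communication complexity of running a mechanism equals the base-$2$ logarithm, rounded up, of its menu-size (and using that any protocol may be padded with dummy bits, so that requiring the communication complexity of running $M$ to be \emph{exactly} $D_n(\varepsilon)$ is no different from requiring it to be \emph{at most} $D_n(\varepsilon)$). For the upper bound I would invoke the menu-size upper bound of \citet{dhn14}: a menu-size of $(\nicefrac{\log n}{\varepsilon})^{O(n)}$ always suffices to guarantee expected revenue within an additive $\varepsilon$ of $\OPT(F)$ for every $F\in\Delta\bigl([0,1]^n\bigr)$. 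For a fixed $n$ this quantity is $\mathrm{poly}(\nicefrac{1}{\varepsilon})$, so its logarithm is $O(\log\nicefrac{1}{\varepsilon})$, and by \citet{bgn17} the corresponding mechanisms can be run with communication $O(\log\nicefrac{1}{\varepsilon})$, uniformly over~$F$.

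For the lower bound I would reduce to the two-good instance of \cref{lower} by a padding argument. Let $F\in\Delta\bigl([0,1]\bigr)$ be the distribution from \cref{lower} (the Beta distribution $B(1,2)$), let $\delta_0\in\Delta\bigl([0,1]\bigr)$ be the point mass at~$0$, and set $G\eqdef F\times F\times\delta_0^{n-2}$: a product distribution on $[0,1]^n$ which adjoins $n-2$ ``dummy'' goods whose value is identically~$0$. I would establish two facts: (i)~$\OPT(G)=\OPT(F^2)$ --- adjoining all-zero dummy allocations to the outcomes of an optimal two-good mechanism produces an $n$-good mechanism whose $G$-revenue equals $\OPT(F^2)$, while conversely, on $\mathrm{supp}(G)=[0,1]^2\times\{0\}^{n-2}$ the buyer's utility from an outcome $(p_1,\dots,p_n;t)$ equals $p_1x_1+p_2x_2-t$, independent of $p_3,\dots,p_n$, so no $n$-good mechanism can exceed $\OPT(F^2)$ on~$G$; and (ii)~deleting the dummy-good coordinates from every outcome of an $n$-good mechanism $M$ (and merging coinciding outcomes) yields a valid individually rational truthful two-good mechanism $M'$ of menu-size at most that of $M$, with $\Rev_{F^2}(M')=\Rev_G(M)$ --- by the utility computation in~(i), the menu-choice, and hence the payment, of each two-good type is unchanged under this map, so the taxation principle applies. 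Granting these, for any $n$-good mechanism $M$ of menu-size at most $C(\varepsilon)=\Omega(\nicefrac{1}{\sqrt[4]{\varepsilon}})$, applying \cref{lower} to $M'$ gives $\Rev_G(M)=\Rev_{F^2}(M')<\OPT(F^2)-\varepsilon=\OPT(G)-\varepsilon$; translating back via \citet{bgn17}, running any $n$-good mechanism that comes within $\varepsilon$ of $\OPT(G)$ requires communication $\Omega(\log\nicefrac{1}{\varepsilon})$. Since $G$ is a product distribution, this also yields the last sentence of the \lcnamecref{cc-add}.

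The substantive work having already been carried out in \cref{lower} and in the cited menu-size upper bounds, I expect no real obstacle; the one step meriting care is~(ii) --- verifying that the coordinate-deletion map on outcomes does not increase the menu-size and that its output is a bona fide truthful mechanism to which the taxation principle and \cref{lower} apply. This should be routine from the utility-function description used in \cref{sec:lower}: the utility function of $M'$ is simply the restriction of that of $M$ to the coordinate subspace $\{x_3=\dots=x_n=0\}$, again a maximum of at most as many affine functions of $(x_1,x_2)$ as there are entries in the menu of $M$, and whose payment function for each two-good type reproduces --- by the envelope identity --- the payment of the corresponding type in~$M$.
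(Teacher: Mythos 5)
Your proof is correct and takes essentially the same approach as the paper: the paper states the corollary as an immediate consequence of \cref{lower} and the menu-size upper bound of \citet{hn13}/\citet{dhn14} (via \citet{bgn17}), leaving implicit the padding argument that lifts the two-good lower bound to $n$ goods, which you have spelled out carefully and correctly with the dummy-goods distribution $G=F\times F\times\delta_0^{n-2}$.
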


For the case of one good, the seminal result of \citet{m81} shows that there exists a (precisely) revenue-maximizing mechanism with only two possible outcomes (and hence deterministic communication complexity of $1$), which simply offers the good for a suitably chosen take-it-or-leave-it price. \cref{cc-add} therefore shows a precise dichotomy in the asymptotic communication complexity of up-to-$\varepsilon$ revenue maximization, between the case of one good (\citeauthor{m81}'s result; $1$ bit of communication) on the one hand, and the case of any other fixed number of goods ($\Theta(\log\nicefrac{1}{\varepsilon})$ bits of communication) on the other hand.

\subsection{Multiplicative Approximation}

In a scenario where the valuations may be unbounded, i.e., $v_i\in[0,\infty)$ for all $i$, \citet{hn13} have shown that no finite menu-size suffices for maximizing revenue up to a multiplicative\footnote{For unbounded valuations, it makes no sense to consider additive guarantees, as the problem is invariant under scaling of the currency.}~$\varepsilon$, and consequently \citet{hn14} asked\footnote{\citet{hn14} is a manuscript combining \citet{hn13} with an earlier paper.} whether this impossibility may be overcome for the case of independently distributed valuations for the goods. \cite*{bgn17} gave a positive answer, showing that for every $n$ and~$\varepsilon$, a finite menu-size suffices, and moreover gave an upper bound of $(\nicefrac{\log n}{\varepsilon})^{O(n)}$ on the sufficient menu-size. Since for valuations in $[0,1]$, revenue maximization up to a multiplicative~$\varepsilon$ is a stricter requirement than revenue maximization up to an additive~$\varepsilon$, our lower bound from \cref{lower} immediately provides a lower bound for this scenario as well.

\begin{corollary}[Menu-Size for Multiplicative Approximation: Lower Bound]\label{lower-mult}
There exist $C(\varepsilon)=\Omega(\nicefrac{1}{\sqrt[4]{\varepsilon}})$ and a distribution $F\in\Delta\bigl([0,1]\bigr)\subseteq\Delta\bigl([0,\infty)\bigr)$, such that for every $\varepsilon>0$ it is the case that $\Rev_{F^2}(M)<(1-\varepsilon)\cdot\OPT(F^2)$ for every mechanism $M$ with menu-size at most $C(\varepsilon)$.
\end{corollary}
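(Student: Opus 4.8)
The plan is to reduce \cref{lower-mult} directly to \cref{lower}, using the very same distribution $F\eqdef B(1,2)$ (density $f(x)=2(1-x)$) that is used there, and exploiting the fact that $\OPT(F^2)$ is a fixed positive constant. Since $[0,1]\subseteq[0,\infty)$ we have $\Delta\bigl([0,1]\bigr)\subseteq\Delta\bigl([0,\infty)\bigr)$, so it suffices to establish the claimed menu-size lower bound for this $F$ within the bounded-valuations setting; it then holds a fortiori in the unbounded-valuations setting.

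First I would record that $\rho\eqdef\OPT(F^2)$ satisfies $0<\rho<\infty$: it is finite because all valuations lie in $[0,1]$, so no mechanism can extract expected revenue exceeding $2$; and it is strictly positive because already offering, say, Good~$1$ alone at a suitable fixed take-it-or-leave-it price yields positive expected revenue from $F^2$. Next, given $\varepsilon>0$, I would invoke \cref{lower} with its additive parameter set to $\varepsilon\rho$, obtaining that every mechanism $M$ with menu-size at most $C(\varepsilon\rho)$ satisfies
\[
\Rev_{F^2}(M)<\OPT(F^2)-\varepsilon\rho=\rho-\varepsilon\rho=(1-\varepsilon)\cdot\OPT(F^2),
\]
which is precisely the multiplicative guarantee we need to rule out.

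It remains only to track the asymptotics of the menu-size threshold. Since \cref{lower} gives $C(\varepsilon\rho)=\Omega\bigl(\nicefrac{1}{\sqrt[4]{\varepsilon\rho}}\bigr)$ and $\rho$ does not depend on $\varepsilon$, we have $C(\varepsilon\rho)=\Omega\bigl(\nicefrac{1}{\sqrt[4]{\varepsilon}}\bigr)$; taking the function $C$ in \cref{lower-mult} to be this quantity completes the argument.

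I do not expect any genuine obstacle: all of the substantive work --- the optimal-transport duality analysis and the geometric \cref{approx} --- has already been carried out in the proof of \cref{lower}, and the present statement follows from the elementary observation that, for valuations in $[0,1]$, a multiplicative $(1-\varepsilon)$-approximation of $\OPT(F^2)$ is exactly an additive $(\varepsilon\cdot\OPT(F^2))$-approximation. The only point requiring (trivial) care is confirming that $\OPT(F^2)$ is bounded away from both $0$ and $\infty$, which is immediate as noted.
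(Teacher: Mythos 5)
Your proof is correct and matches the paper's approach: the paper likewise observes that for valuations in $[0,1]$ (so $\OPT(F^2)$ is a bounded constant), a multiplicative $\varepsilon$-loss is at most an additive $\varepsilon\cdot\OPT(F^2)=O(\varepsilon)$ loss, and then directly transfers the lower bound of \cref{lower}. Your reparametrization by $\varepsilon\rho$ is a slightly more explicit way of saying the same thing, and the asymptotics $\Omega(\nicefrac{1}{\sqrt[4]{\varepsilon}})$ are unaffected since $\rho$ is a fixed constant.
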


By an argument similar to that yielding \cref{cc-add}, using the above upper bound of \citet{bgn17} in lieu of that of \citet{hn13} / \citet{dhn14}, we therefore obtain an analogue of \cref{cc-add} for this setting as well.

\begin{corollary}[Communication Complexity of Multiplicative Approximation: Tight Bound]\label{cc-mult}
Fix $n\ge2$. There exists $D_n(\varepsilon)=\Theta(\log\nicefrac{1}{\varepsilon})$ such that for every $\varepsilon>0$ it is the case that $D_n(\varepsilon)$ is the minimum communication complexity that satisfies the following: for every product distribution $F\in\Delta\bigl([0,\infty)\bigr)^n$ there exists a mechanism~$M$ for selling $n$ goods such that the deterministic communication complexity of running~$M$ is $D_n(\varepsilon$) and such that ${\Rev_F(M)>(1-\varepsilon)\cdot\OPT(F)}$. This continues to hold even if each of the distributions whose product is $F$ is guaranteed to be supported in $[0,1]$.
\end{corollary}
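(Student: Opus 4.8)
The plan is to derive matching upper and lower bounds on $D_n(\varepsilon)$ by feeding the known menu-size results for up-to-multiplicative-$\varepsilon$ revenue maximization into the tight correspondence of \citet{bgn17} (the deterministic communication complexity of running a mechanism equals the base-$2$ logarithm, rounded up, of its menu-size), exactly along the lines of the derivations of \cref{cc,cc-add}. As a preliminary observation, padding a protocol with dummy bits turns a mechanism whose running requires at most $D$ bits into one whose running requires exactly any prescribed $D'\ge D$ bits, without changing the mechanism or its revenue; hence the set of values $D$ for which the property in the statement holds is upward-closed, so its minimum $D_n(\varepsilon)$ is well defined, and it remains only to sandwich it.

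For the upper bound, fix $n$ and invoke the result of \citet{bgn17} that a menu-size of $(\nicefrac{\log n}{\varepsilon})^{O(n)}$ suffices for up-to-multiplicative-$\varepsilon$ revenue maximization for every product distribution $F\in\Delta\bigl([0,\infty)\bigr)^n$. Taking the (rounded-up) base-$2$ logarithm of this bound via \citet{bgn17}, and using that $n$ is fixed, bounds the communication complexity of running the corresponding mechanism by $O\bigl(n\log\nicefrac{\log n}{\varepsilon}\bigr)=O(\log\nicefrac{1}{\varepsilon})$; padding each such protocol up to this common value shows that the property in the statement holds for this value of $D_n(\varepsilon)$, so the minimum such value is $O(\log\nicefrac{1}{\varepsilon})$.

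For the lower bound, I would use \cref{lower-mult}: for $F=B(1,2)$, every mechanism $M$ with menu-size at most $C(\varepsilon)=\Omega(\nicefrac{1}{\sqrt[4]{\varepsilon}})$ has $\Rev_{F^2}(M)<(1-\varepsilon)\cdot\OPT(F^2)$. For $n=2$ this is already a witness; for $n>2$ I would pad, letting $G$ be the product of $B(1,2)$ on each of the first two coordinates and of a point mass at $0$ on each of the remaining $n-2$ coordinates (every factor of $G$ is then supported in $[0,1]$, as the last clause of the statement requires). Since the buyer values each of the last $n-2$ goods at $0$ and the seller has no use for unsold goods, $\OPT(G)=\OPT(F^2)$, and any $n$-good mechanism $M$ whose running requires at most $D$ bits induces --- by retaining only the first two allocation coordinates of every outcome, together with its price --- a two-good mechanism $M'$ with $\Rev_{F^2}(M')=\Rev_G(M)$ whose running requires at most $D$ bits (the buyer simply pads her type with zeros), hence with menu-size at most $2^D$ by \citet{bgn17}. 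Consequently, if $D<\log_2 C(\varepsilon)=\Omega(\log\nicefrac{1}{\varepsilon})$, then $M'$ has menu-size below $C(\varepsilon)$, so $\Rev_G(M)=\Rev_{F^2}(M')<(1-\varepsilon)\cdot\OPT(F^2)=(1-\varepsilon)\cdot\OPT(G)$; thus $G$ witnesses that the property in the statement fails for this $D$, and therefore $D_n(\varepsilon)=\Omega(\log\nicefrac{1}{\varepsilon})$.

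Combining the two directions yields $D_n(\varepsilon)=\Theta(\log\nicefrac{1}{\varepsilon})$, with the witnessing distribution $G$ having every factor supported in $[0,1]$, which also gives the final sentence of the statement. I expect the only mildly subtle points to be the $n>2$ padding reduction --- verifying that appending worthless goods changes neither $\OPT$ nor the revenue of the induced two-good mechanism, and cannot create additional distinct outcomes on the first two coordinates --- and pinning down the ``minimum $D$'' quantifier through the upward-closedness observation; the rest is a direct substitution into the argument behind \cref{cc,cc-add}.
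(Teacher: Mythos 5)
Your proposal is correct and takes essentially the same route as the paper, which proves \cref{cc-mult} in a single sentence by combining \cref{lower-mult}, the $(\nicefrac{\log n}{\varepsilon})^{O(n)}$ menu-size upper bound of \citet{bgn17}, and their log-of-menu-size characterization of communication complexity, ``similarly to'' \cref{cc-add}. The padding reduction for $n>2$ (appending $n-2$ goods with a point mass at $0$ and projecting mechanisms back to two goods) and the upward-closedness observation that justifies speaking of a minimum $D$ are the natural fill-ins for what the paper leaves implicit, and you verify them correctly.
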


\cref{cc-mult} shows that the dichotomy between one good and any other fixed number of goods that is shown by \cref{cc-add} also holds for multiplicative up-to-$\varepsilon$ revenue maximization.

\section{Discussion}\label{discussion}

In a very recent paper, \citet{ssw17} analyze the menu-size required for approximate revenue maximization in what is known as the FedEx problem \citep{fgkk16}. Interestingly, they also use piecewise-linear approximation of concave functions to derive their bounds. Nonetheless, there are considerably many differences between their analysis and ours, e.g.: the effects of bad piecewise-linear approximation on the revenue,\footnote{There: even a single point having large distance can cause quantifiable revenue loss; Here: at least a certain measure of points having large distance causes quantifiable revenue loss.}
the approximated/approximating object,\footnote{There: revenue curves, with vertices corresponding to menu entries; Here: contour lines of the allocation function, with edges corresponding to menu entries.} the mathematical features of the approximated object,\footnote{There: piecewise-linear; Here: strictly concave.} the geometric/analytic proof of the bound on the number of linear segments, and finally, the argument that uses the piecewise-linear approximation and whether or not it couples the desired approximation with another parameter of the problem.\footnote{There: lower bound achieved by coupling the desired approximation with the number of possible deadlines~$n$ (setting $\varepsilon\eqdef\nicefrac{1}{n^2}$); Here: desired approximation uncoupled from the number of goods $n$.} It therefore seems to be unlikely that both analyses are special cases of some general analysis, and so it would be interesting to see whether piecewise-linear approximations of concave (or other) functions ``pop up'' in the future in any additional contexts in connection with bounds on the menu-size of mechanisms.\footnote{Incidentally, other known derivations of menu-size bounds, such as those in \citet{hn13}, \citet{dhn14}, and \citet{bgn17}, do not use (even implicitly, to the best of our understanding) piecewise-linear approximations.}

\looseness=-1 As mentioned in \cref{extensions}, \citet{bgn17} prove an upper bound on the menu-size required for multiplicative up-to-$\varepsilon$ approximate revenue maximization when selling $n$ goods to an additive buyer with independently distributed valuations. For any fixed $n$, this bound is polynomial in $\nicefrac{1}{\varepsilon}$, and the lower bound that we establish in \cref{lower-mult} shows in particular that a polynomial dependence cannot be avoided here (e.g., it cannot be reduced to a logarithmic or lower dependence) even for bounded distributions, yielding a tight communication-complexity bound (see \cref{cc-mult} in \cref{extensions}). Alternatively to fixing $n$ and analyzing the menu-size and communication complexity as functions of $\varepsilon$ as we do, one may fix $\varepsilon$ and analyze these quantities as functions of~$n$. For any fixed~$\varepsilon>0$, the upper bound of \citet{bgn17} is exponential in~$n$; therefore, another question left open by that paper is whether this exponential dependence may be avoided. (In terms of communication complexity, this question asks whether for every fixed $\varepsilon$, the communication complexity can be logarithmic or even polylogarithmic in~$n$.) Some progress on this question has been made already by \citet{bgn17}, who show that on the one hand, a polynomial dependence on~$n$ suffices for some values of~$\varepsilon$ (namely, $\varepsilon\ge\nicefrac{5}{6}$), and that on the other hand, an exponential dependence on $n$ is required when coupling $\varepsilon$ with the number of goods $n$ by setting $\varepsilon\eqdef\nicefrac{1}{n}$;\footnote{This result also shows that the exponential dependence on $n$ in the upper bounds of \citet{hn13} and \citet{dhn14} for \emph{additive} up-to-$\varepsilon$ approximation (see \cref{extensions}) is required even for bounded product distributions. However, one may claim that the ``more interesting'' question when keeping~$\varepsilon$ fixed and letting $n$ vary is that of a multiplicative approximation, as asking for at most an $\varepsilon$ additive loss while increasing the total value in the market (by adding more and more items) is quite a harsh requirement.} however, as noted above, they leave the general case of arbitrary fixed~$\varepsilon>0$ (uncoupled from~$n$, e.g., $\varepsilon=1\%$) as their main remaining open question.
While the current state-of-the-art literature seems to be a long way from identifying very-high-dimensional optimal mechanisms, and especially from identifying their duals (indeed, it took quite some impressive effort for \citet{gk14} to identify the optimal mechanism for $6$ goods whose valuations are i.i.d.\ uniform in $[0,1]$), one may hope that with time, it may be possible to do so. It seems plausible that if one could generate high-dimensional optimal mechanisms (and corresponding duals) for which the high-dimensional analogue of the curve that we denote by $S$ in \cref{sec:lower} has large-enough measure (while maintaining a small-enough radius of curvature, etc.), then an argument similar to the one that we give in the proof of \cref{lower} may be used to show that an exponential dependence on~$n$ in the above bound is indeed required for sufficiently small, yet fixed, $\varepsilon$, and thereby resolve the above open question. Whether one can generate such mechanisms with large-enough high-dimensional analogues of $S$, however, remains to be seen.

\section*{Acknowledgments}
Yannai Gonczarowski is supported by the Adams Fellowship Program of the Israel Academy of Sciences and Humanities.
His work is supported by ISF grant 1435/14 administered by the Israeli Academy of Sciences and by Israel-USA Bi-national Science Foundation (BSF) grant number 2014389.
This project has received funding from the European Research Council (ERC) under the European Union's
Horizon 2020 research and innovation programme (grant agreement No 740282).
I thank Costis Daskalakis, Kira Goldner, Sergiu Hart, Ian Kash, Noam Nisan, Christos Tzamos, and an anonymous referee for helpful feedback. I thank Costis Daskalakis, Alan Deckelbaum, and Christos Tzamos for permission to base \cref{gamma} on a \lcnamecref{gamma} from their paper.

\bibliographystyle{abbrvnat}
\bibliography{menu-transport}

\begin{thebibliography}{20}
\providecommand{\natexlab}[1]{#1}
\providecommand{\url}[1]{\texttt{#1}}
\expandafter\ifx\csname urlstyle\endcsname\relax
  \providecommand{\doi}[1]{doi: #1}\else
  \providecommand{\doi}{doi: \begingroup \urlstyle{rm}\Url}\fi

\bibitem[Babaioff et~al.(2017)Babaioff, Gonczarowski, and Nisan]{bgn17}
M.~Babaioff, Y.~A. Gonczarowski, and N.~Nisan.
\newblock The menu-size complexity of revenue approximation.
\newblock In \emph{Proceedings of the 49th Annual ACM Symposium on Theory of
  Computing (STOC)}, pages 869--877, 2017.

\bibitem[Daskalakis(2015)]{d15}
C.~Daskalakis.
\newblock Multi-item auctions defying intuition?
\newblock \emph{ACM SIGecom Exchanges}, 14\penalty0 (1):\penalty0 41--75, 2015.

\bibitem[Daskalakis et~al.(2013)Daskalakis, Deckelbaum, and Tzamos]{ddt13}
C.~Daskalakis, A.~Deckelbaum, and C.~Tzamos.
\newblock Mechanism design via optimal transport.
\newblock In \emph{Proceedings of the 14th ACM Conference on Electronic
  Commerce (EC)}, pages 269--286, 2013.

\bibitem[Daskalakis et~al.(2015)Daskalakis, Deckelbaum, and Tzamos]{ddt15}
C.~Daskalakis, A.~Deckelbaum, and C.~Tzamos.
\newblock Strong duality for a multiple-good monopolist.
\newblock In \emph{Proceedings of the 16th ACM Conference on Economics and
  Computation (EC)}, pages 449--450, 2015.

\bibitem[Dughmi et~al.(2014)Dughmi, Han, and Nisan]{dhn14}
S.~Dughmi, L.~Han, and N.~Nisan.
\newblock Sampling and representation complexity of revenue maximization.
\newblock In \emph{Proceedings of the 10th Conference on Web and Internet
  Economics (WINE)}, pages 277--291, 2014.

\bibitem[Fiat et~al.(2016)Fiat, Goldner, Karlin, and Koutsoupias]{fgkk16}
A.~Fiat, K.~Goldner, A.~R. Karlin, and E.~Koutsoupias.
\newblock The {F}ed{E}x problem.
\newblock In \emph{Proceedings of the 17th ACM Conference on Economics and
  Computation (EC)}, pages 21--22, 2016.

\bibitem[Giannakopoulos and Koutsoupias(2014)]{gk14}
Y.~Giannakopoulos and E.~Koutsoupias.
\newblock Duality and optimality of auctions for uniform distributions.
\newblock In \emph{Proceedings of the 15th ACM Conference on Economics and
  Computation (EC)}, pages 259--276, 2014.

\bibitem[Giannakopoulos and Koutsoupias(2015)]{gk15}
Y.~Giannakopoulos and E.~Koutsoupias.
\newblock Selling two goods optimally.
\newblock In \emph{Proceedings of the 42nd International Colloquium on
  Automata, Languages, and Programming (ICALP)}, pages 650--662, 2015.

\bibitem[Hart and Nisan(2013)]{hn13}
S.~Hart and N.~Nisan.
\newblock The menu-size complexity of auctions.
\newblock In \emph{Proceedings of the 14th ACM Conference on Electronic
  Commerce (EC)}, page 565, 2013.

\bibitem[Hart and Nisan(2014)]{hn14}
S.~Hart and N.~Nisan.
\newblock How good are simple mechanisms for selling multiple goods?
\newblock Discussion Paper 666, Center for the Study of Rationality, The Hebrew
  University of Jerusalem, 2014.

\bibitem[Hart and Reny(2015)]{hr15}
S.~Hart and P.~J. Reny.
\newblock Maximal revenue with multiple goods: Nonmonotonicity and other
  observations.
\newblock \emph{Theoretical Economics}, 10\penalty0 (3):\penalty0 893--922,
  2015.

\bibitem[Kash and Frongillo(2016)]{kf16}
I.~A. Kash and R.~M. Frongillo.
\newblock Optimal auctions with restricted allocations.
\newblock In \emph{Proceedings of the 17th ACM Conference on Economics and
  Computation (EC)}, pages 215--232, 2016.

\bibitem[Manelli and Vincent(2006)]{mv06}
A.~M. Manelli and D.~R. Vincent.
\newblock Bundling as an optimal selling mechanism for a multiple-good
  monopolist.
\newblock \emph{Journal of Economic Theory}, 127\penalty0 (1):\penalty0 1--35,
  2006.

\bibitem[McAfee and McMillan(1988)]{mm88}
R.~P. McAfee and J.~McMillan.
\newblock Multidimensional incentive compatibility and mechanism design.
\newblock \emph{Journal of Economic Theory}, 46\penalty0 (2):\penalty0
  335--354, 1988.

\bibitem[Morgenstern and Roughgarden(2015)]{mr15}
J.~Morgenstern and T.~Roughgarden.
\newblock On the pseudo-dimension of nearly optimal auctions.
\newblock In \emph{Proceedings of Advances in Neural Information Processing
  Systems 28 (NIPS)}, pages 136--144, 2015.

\bibitem[Myerson(1981)]{m81}
R.~Myerson.
\newblock Optimal auction design.
\newblock \emph{Mathematics of Operations Research}, 6\penalty0 (1):\penalty0
  58--73, 1981.

\bibitem[Pavlov(2011)]{p11}
G.~Pavlov.
\newblock A property of solutions to linear monopoly problems.
\newblock \emph{The BE Journal of Theoretical Economics}, 11\penalty0
  (1):\penalty0 \#4, 2011.

\bibitem[Rochet(1987)]{r87}
J.-C. Rochet.
\newblock A necessary and sufficient condition for rationalizability in a
  quasi-linear context.
\newblock \emph{Journal of Mathematical Economics}, 16\penalty0 (2):\penalty0
  191--200, 1987.

\bibitem[Saxena et~al.(2018)Saxena, Schvartzman, and Weinberg]{ssw17}
R.~R. Saxena, A.~Schvartzman, and S.~M. Weinberg.
\newblock The menu complexity of ``one-and-a-half-dimensional'' mechanism
  design.
\newblock In \emph{Proceedings of the 29th Annual ACM-SIAM Symposium on
  Discrete Algorithms (SODA)}, pages 2026--2035, 2018.

\bibitem[Thanassoulis(2004)]{t04}
J.~Thanassoulis.
\newblock Haggling over substitutes.
\newblock \emph{Journal of Economic Theory}, 117\penalty0 (2):\penalty0
  217--245, 2004.

\end{thebibliography}

\end{document}